\documentclass[11pt]{article}
\usepackage{amsfonts}
\usepackage{latexsym,amssymb,amsmath,graphics,cite,arydshln}
\usepackage{tikz}
\usepackage{fancyhdr}
\usepackage{lipsum}
\usepackage{lineno}
\usepackage{color}
\usepackage{diagbox}
\usepackage{colortbl}
\usepackage{multirow}

\begin{document}
\newcommand{\qed}{\hphantom{.}\hfill $\Box$\medbreak}
\newcommand{\proof}{\noindent{\bf Proof \ }}

\newtheorem{theorem}{Theorem}[section]
\newtheorem{lemma}[theorem]{Lemma}
\newtheorem{corollary}[theorem]{Corollary}
\newtheorem{remark}[theorem]{Remark}
\newtheorem{example}[theorem]{Example}
\newtheorem{definition}[theorem]{Definition}
\newtheorem{construction}[theorem]{Construction}
\newtheorem{fact}[theorem]{Fact}
\newtheorem{proposition}[theorem]{Proposition}
\newtheorem{conjecture}[theorem]{Conjecture}
\newtheorem{claim}[theorem]{Claim}

\newenvironment{poc}{\begin{proof}[Proof of the claim]\renewcommand*{\qedsymbol}{$\blacksquare$}}{\end{proof}}
\begin{center}
{\Large\bf Bounds and constructions for constant/low-power error -correcting cooling codes}
\renewcommand{\thefootnote}{}
\footnotetext{This work of Shuangqing Liu is supported by The Natural Science Foundation of the Jiangsu Higher Education Institutions of China under Grant 25KJB110016. 
This work of Tingting Tong is supported by the National Key Research and Development Program of China under Grant 2021YFA1001000, the National Natural Science Foundation of China under Grants 12571576 and 12231014, and the Shandong Provincial Natural Science Foundation under Grants ZR2025QA05 and ZR2026ZD36.
This work of Binwei Zhao is supported by the National Natural Science Foundation of China under Grant 12271023.
(Corresponding author: Tingting Tong)}

\vskip12pt
Shuangqing Liu$^1$,  Tingting Tong$^2$,
 Menglong Zhang$^3$, Binwei Zhao$^4$, Yuhao Zhao$^5$ \\[2ex] {\footnotesize $^1$Department of Mathematics, Suzhou University of Science and Technology, Suzhou 215009, P. R. China}\\
  {\footnotesize$^2$ The State Key Laboratory of Cryptography and Digital Economy Security, the Key Laboratory of Cryptologic Technology and Information Security, Ministry of Education, and the School of Cyber Science and Technology, Shandong University, Qingdao, Shandong 266237, P. R. China. }\\
 {\footnotesize$^3$ Institute of Mathematics and Interdisciplinary Sciences, Xidian University, Xi'an 710126, P.R. China}\\
 {\footnotesize$^4$ School of Mathematics and Statistics, Beijing Jiaotong University, Beijing 100044, P. R. China}\\
 {\footnotesize$^5$ School of Mathematical Sciences, University of Science and Technology of China, Hefei 230026, P. R. China}\\

{\footnotesize
shuangqingliu@usts.edu.cn, tingtingtong2000@163.com, zhangmenglong@xidian.edu.cn, binweizhao@qq.com, zhaoyh21@mail.ustc.edu.cn}
\vskip12pt
\end{center}

\vskip12pt

\noindent {\bf Abstract:}
The low-power error-correcting cooling (LPECC) codes and constant-power error-correcting cooling (CPECC) codes, introduced in [IEEE  Trans. Inf. Theory, 64 (2018), 3062--3085; 66 (2020), 4804--4818], respectively, are two coding schemes designed to simultaneously control the peak temperature and average power consumption of on-chip buses while providing error-correction capability for transmitted information.
This paper establishes new upper bounds for both $(n,1,w,w-2)$-CPECC codes and $(n,t,w,w-2)$-CPECC codes using graph-theoretic techniques, and constructs several new families of optimal CPECC codes using combinatorial configurations. Moreover, it completely resolves the conjecture concerning CPECC codes posed in [IEEE  Trans. Inf. Theory, DOI: 10.1109/TIT.2026.3721101].  Finally, we derive a new upper bound for $(n,t,w,w-2)$-LPECC codes by probabilistic method, along with new optimal families, and establish the relationship between optimal $(n,t,w,w-2)$-LPECC codes and optimal $(n+1,t,w,w-2)$-CPECC codes.

\noindent {\bf Keywords}: low-power cooling code, constant-power cooling code, balanced incomplete block design, packing, resolvable.


\section{Introduction}
Power and heat dissipation have become major design constraints for modern chips, including both battery-powered devices and high-performance systems. Various encoding techniques have been proposed in the literature \cite{bmmss,ccl,po,sm} to reduce the overall power consumption of on-chip and off-chip buses. However, such techniques do not directly address peak temperature minimization. Hence, power-aware design alone is generally insufficient for effective thermal management, as it does not explicitly control the spatial and temporal distribution of heat.

Chee et al. \cite{1} introduced a new class of codes, called cooling codes, to directly control the peak temperature of a bus by cooling its hottest wires. This is achieved by avoiding state transitions on the hottest wires for as long as necessary until their temperature drops off. Cooling codes are based on differential encoding. 
Specifically, suppose that the current state of the wires is
\((s_1,s_2,\ldots,s_n)\), where wire \(i\) is in state \(s_i\), and that the binary vector \((x_1,x_2,\ldots,x_n)\) is to be transmitted. The wires are then updated to the state \((s'_1,s'_2,\ldots,s'_n)\), where \(s'_i=s_i+x_i \pmod 2,~ 1\le i\le n.\)
Thus, a state transition occurs on wire \(i\) if and only if \(x_i=1\).

To simultaneously control the peak temperature and the average power
consumption, Chee et al.~\cite{1} further introduced low-power cooling (LPC) codes. 
In addition, if an LPC code can control  error-correction for the transmitted information, then it called a low-power error-correcting cooling (LPECC) code.
Subsequently, Chee et al.~\cite{cekvw} introduced constant-power error-correcting cooling (CPECC) codes, in which the number of state transitions is required to remain constant in each transmission.

In this paper, we are particularly interested in LPECC codes and CPECC codes.
An $(n,t,w,e)$-LPECC code \cite{1} is a coding scheme for communication over a bus consisting of $n$ wires satisfying the following three properties:
\begin{itemize}
\item[\textbf{A}($t$):] every transmission does not cause state transitions on the $t$ hottest wires;
\item[\textbf{B}($w$):] the total number of state transitions on all the wires is at most $w$, in every transmission;
\item[\textbf{C}($e$):] up to $e$ transmission errors (0 received as 1, or 1 received as 0) on the $n$ wires can be corrected.
\end{itemize}
If Property $\textbf{B}(w)$ is replaced with
$\textbf{B}'(w)$: the total number of state transitions on all the wires is exactly $w$, in every transmission,
then the LPECC code is called an $(n,t,w,e)$-CPECC code. Note that if the coding scheme only satisfies Property $\textbf{A}(t)$, the code is an $(n, t)$-cooling code; if the coding scheme satisfies Property $\textbf{A}(t)$ and simultaneously Property $\textbf{B}(w)$, the code is an $(n,t,w)$-low-power cooling (LPC) code.

Chee et al. \cite{1} introduced three methods to construct LPECC codes based on complete hypergraphs, dual codes and sunflowers. Liu and Ji~\cite{lj} presented some upper bounds and several families of optimal LPECC codes using combinatorial configurations. Chee et al. \cite{cekvw} also constructed CPECC codes by some special linear codes, for example Reed-Solomon code. Zhao and Zhang \cite{z-z} also gave some asymptotic results of LPECC codes and CPECC codes. More recently, Liu et al. \cite{lyj} showed new optimal $(n,t,4,2)$-CPECC codes and a conjecture on CPECC codes.
 Motivated by these results, we further investigate upper bounds and optimal constructions for LPECC and CPECC codes, and resolve the conjecture posed in \cite{lyj}. The known and new results are summarized in Table~I.

\begin{table}[!htbp]
\begin{center}
{\small \centerline{\footnotesize TABLE I:  KNOWN AND NEW UPPER/LOWER BOUNDS OF LPECC/CPECC CODES}

\vspace{0.3cm}
\begin{tabular}{|c|c|c|}
\hline $C(n,t,w,e)$/$C'(n,t,w,e)$ & Constraints & Reference \\ \hline
\multirow{2}{*}{$C(n,t,w,w-2)> \frac{n-1}{w-1}$} & $n=w(t+1)\equiv w\pmod {w(w-1)}$,  & \multirow{2}{*}{\cite{1}} \\
& and $n$ is sufficiently large& \\ \hline
$C(n,t,4,2)>\frac{n-1}{3}$ & $n=4(t+1)\equiv4\pmod {12}$ & \cite{1} \\ \hline
$ C(n,t,4,2)\geq\frac{n^2 }{12(t+1)}(1-o(1))$ & None & \cite{z-z} \\ \hline
$C(n,t,3,1)=\frac{n^2}{6(t+1)}(1-o(1))$ & None & \cite{z-z} \\ \hline
\multirow{2}{*}{$C(n,t,w,w-2)=\frac{{n+1\choose 2}}{{w+t\choose 2}}$} & $w+t-1|n$, $w\geq t^2+2t+2\geq 5$,  & \multirow{2}{*}{\cite{z-z}} \\
& ${w+t\choose 2}|{n+1\choose 2}$ and $n$ is sufficiently large& \\ \hline
\multirow{3}{*}{$C(n,t,4,2)= \big\lfloor\frac{n(n+1)}{10(t+1)}\big\rfloor$} & $t\equiv 1 \pmod {2}$, $n\equiv 4 \pmod {20}$ & \multirow{3}{*}{\cite{lyj}} \\
& $n\equiv 4\pmod {\frac{5(t+1)}{2}}$, $n\geq 5(t+1)$ & \\
& and $n\notin \{44, 344, 464, 644\}$ & \\ \hline
$C(n,1,3,1)=\big\lfloor\frac{n(n+1)}{12}\big\rfloor$ & $n\neq6$ & \cite{lj} \\ \hline
$C(n,2,3,1)=\big\lfloor\frac{n(n+1)}{18}\big\rfloor$ & $n\neq 6,7, 9$ & {\cite{lyj,lj}}\\ \hline
 $C(n,t,w,w-2)\le \frac{n(n+1)}{2(2w-3)(t+1)}$ & $n\ge (2w-3)(t+1)$ & Theorem \ref{thm:lpecc-upper}\\ \hline
  \multirow{5}{*}{$C(n,t,w,w-2)= \frac{n(n+1)}{2(2w-3)(t+1)}$} &$n> e^{e^{(2w-3)^{12(2w-3)^2}}}-1$,  & \multirow{5}{*}{Corollary \ref{cor-4}} \\
& $n\ge (2w-3)(t+1)-1$,& \\
& $t=l(w-2)-1$,& \\
& $n\equiv 0\pmod {(2w-4)}$, & \\
& $n\equiv -1\pmod {l(2w-3)}$& \\ \hline
\multirow{2}{*}{$C(n,t,w,w-2)= \left\lfloor\frac{n(n+1)}{2(2w-3)(t+1)}\right\rfloor $} & $n$ is sufficiently large,
& \multirow{2}{*}{  Corollary \ref{cor-5}} \\
& $(i$) or $(ii)$ or $(iii)$ in Corollary \ref{cor-5}& \\  \hline
$C'(qw,q-1,w,e)\geq q^{w-e-1}$ & $q$ is a prime power & \cite{cekvw} \\ \hline
\multirow{3}{*}{ $ C'(n,t,w,e)= \frac{\binom{n}{w-e}}{\binom{w+t}{w-e}}$} & $n$ is sufficiently large, $e\leq w-1$, & \multirow{3}{*}{\cite{z-z}} \\
&  $2\binom{w}{w-e}\geq \binom{w-t-1}{w-e}+\binom{w+t}{w-e}$,  &  \\
&  $\binom{w+t-i}{w-e-i}|\binom{n-i}{w-e-i}$, $0\leq i\leq w-e-1$,  &  \\  \hline
\multirow{2}{*}{$C'(n,t,4,2)= \frac{n(n-1)}{10(t+1)}$}& $t=2\ell-1$, $5\ell|n$, $n\equiv  5 \pmod {20}$,
& \multirow{2}{*}{ {\cite{lyj}}} \\
& $n\neq45, 345, 465, 645$ & \\  \hline
{$C'(n,1,w,w-2)\leq h_1 $}& $w\geq 4$ & Theorem \ref{thm:ub_C'(n,1,w,w-2)}
 \\  \hline
 \multirow{3}{*}{$C'(n,1,w,w-2)= h_2$}& $n\geq e^{{(w+1)}^{{(w+1)}^{2}}}-1$,
& \multirow{3}{*}{ Corollary \ref{cor:C'(n,1,w,w-2)}} \\
& $w\geq4$, $n\equiv 0\pmod {w}$,& \\
& $n(n+1)\equiv 0\pmod {w(w+1)}$& \\  \hline
{$C'(n,t,w,w-2)\leq  \frac{n(n-1)}{2(2w-3)(t+1)} $}& None & Theorem \ref{thm:ub_(n,t,w,w-2)-CPECC}
 \\  \hline
 \multirow{4}{*}{$C'(n,t,w,w-2)= \frac{n(n-1)}{2(2w-3)(t+1)}$} &$n> e^{e^{(2w-3)^{12(2w-3)^2}}}$,   &\multirow{4}{*}{Corollary \ref{cor:coj_solve_2}} \\
& $t=l(w-2)-1$,& \\
& $n\equiv 1\pmod {(2w-4)}$, & \\
& $n\equiv 0\pmod {l(2w-3)}$ & \\ \hline
\multirow{2}{*}{$C'(n,t,w,w-2)= \left\lfloor\frac{n(n-1)}{2(2w-3)(t+1)}\right\rfloor $} & $n$ is sufficiently large,
& \multirow{2}{*}{ Theorem \ref{thm:CPECC_construct_3}} \\
& $(i$) or $(ii)$ or $(iii)$ in Theorem \ref{thm:CPECC_construct_3}& \\  \hline
\end{tabular}}\end{center}
\smallskip
where $h_1=\left\lfloor\frac{n}{w+1}\left\lfloor\frac{n-1}{w}\right\rfloor\right\rfloor+\Big\lfloor\frac{n(n-1)-
	\left\lfloor\frac{n}{w+1}\left\lfloor\frac{n-1}{w}\right\rfloor\right\rfloor(w+1)w}{2w(w-1)}\Big\rfloor$, $h_2=\left\lfloor\frac{n}{w+1}\left\lfloor\frac{n-1}{w}\right\rfloor\right\rfloor+\left\lfloor\frac{n}{2w}\right\rfloor $.
\end{table}

The rest of this paper is organized as follows.
In Section 2, we introduce the LPECC codes, CPECC codes, and some related combinatorial configurations.
In Section 3, we firstly show a tighter upper bound of $(n,1,w,w-2)$-CPECC codes by the clique number of graph (see Lemmas \ref{lem:key-lemma}, \ref{lem:stability}, and Theorem \ref{thm:ub_C'(n,1,w,w-2)}), and their optimal families by BIBDs (see Theorem \ref{thm:bibd_C'(n,1,w,w-2)} and Corollary \ref{cor:C'(n,1,w,w-2)}).
We also present a new upper bound of $(n,t,w,w-2)$-CPECC codes by the $K_w$-transversal number and induced degenerate subgraphs of graph (see Lemmas \ref{lem:K_w-free}, \ref{lem:CPECC_e(G_i)}, and Theorem \ref{thm:ub_(n,t,w,w-2)-CPECC}), and also construct some families of optimal CPECC codes that attain this upper bound by some combinatorial configurations (see Theorems \ref{thm:CPECC_construct_1},  \ref{thm:CPECC_construct_3}, Corollary \ref{cor:coj_solve_2}, and Lemmas \ref{lem:C'(n,9,4,2)}, \ref{lem:CPECC_construct_2}). Applying the asymptotic existence theorem of $(n,k,1)$-RBIBD, we also prove that when $n$ is sufficiently large, this upper bound is tight, which resolves the conjecture in \cite{lyj} (see Corollary \ref{cor:coj_solve_2}).
In Section 4, we firstly present an upper bound of $(n,t,w,w-2)$-LPECC codes (see Theorem \ref{thm:lpecc-upper}) by estimating the transversal number of graph using probabilistic method (see Theorem \ref{thm:clique-transversal-nonuniform}). We also establish the relationship between optimal $(n,t,w,w-2)$-LPECC codes and optimal $(n+1,t,w,w-2)$-CPECC codes (see Theorem \ref{lem:relation}).
In Section 5, we conclude this paper.


\section{Preliminaries}

Given a positive integer $n$, the set $\{1,2,\ldots,n\}$ is abbreviated as $[n]$.
The Hamming weight of a vector  $\boldsymbol v=(v_1,v_2,\ldots,v_n) \in \mathbb{F}^{n}_2$, denoted by
wt$(\boldsymbol v )$, is the number of nonzero positions in $\boldsymbol v $, while the support of $\boldsymbol v $ is defined as $supp(\boldsymbol v)\triangleq \{i\in [n]:v_i\neq0\}$.
The Hamming distance of $\boldsymbol v=(v_1,v_2,\ldots,v_n) \in \mathbb{F}^{n}_2$ and $\boldsymbol u=(u_1,u_2,\ldots,u_n) \in \mathbb{F}^{n}_2$, denoted by $d(\boldsymbol v,\boldsymbol u)$, is the number of positions where $\boldsymbol v$ and $\boldsymbol u$ differ.

\begin{definition}
For positive integers $n,t,w,e$ with $w>e$ and $n\geq t+w$, let ${\cal C}_1, {\cal C}_2,\ldots, {\cal C}_M$ are disjoint subsets of $\mathbb F_2^n$.
A set $\mathcal{C}={\cal C}_1\cup {\cal C}_2\cup\dots\cup{\cal C}_M$ is called an $(n,t,w,e)$\textbf{-LPECC code} of size $M$,
if $\mathcal{C}$ satisfies the following properties:
\begin{itemize}
\item[\textbf{A}$(t)$:] for every \(i\in[M]\) and every \(T\in\binom{[n]}{t}\), there exists \(\boldsymbol v\in C_i\) such that \(\operatorname{supp}(\boldsymbol v)\cap T=\varnothing\);

\item[\textbf{B}$(w)$:] {\em wt}$(\boldsymbol v )\leq w$ for any $\boldsymbol v\in \mathcal C$;

\item[\textbf{C}$(e)$:] $d(\boldsymbol v,\boldsymbol u)\geq2e+1$ for any $1\leq i<j\leq M$, $\boldsymbol v\in\mathcal C_i$ and $\boldsymbol u\in\mathcal C_j$.
\end{itemize}
An $(n,t,w,e)$\textbf{-CPECC code} is an $(n,t,w,e)$-LPECC code $\mathcal{C}={\cal C}_1\cup {\cal C}_2\cup\dots\cup{\cal C}_M$ if {\em wt}$(\boldsymbol v )=w$ for any $\boldsymbol v\in \mathcal C$.
\end{definition}

As a binary vector $\boldsymbol v\in \mathbb{F}^{n}_2$ naturally corresponds to a subset of an $n$-element set, we give the following combinatorial characterizations of LPECC codes and CPECC codes. For an $(n,t,w,e)$-LPECC code $\mathcal{C}={\cal C}_1\cup {\cal C}_2\cup\ldots\cup {\cal C}_M$, let ${\cal P}_i=\{supp(\boldsymbol{v})\colon \boldsymbol{v}\in {\cal C}_i\}$ for each $i\in [M]$ and ${\cal B}=\{supp(\boldsymbol{v})\colon \boldsymbol{v}\in {\cal C}\}$. Note that ${\cal P}_1,{\cal P}_2,\dots,{\cal P}_M$ are a partition of ${\cal B}$.
Such a set ${\cal B}$ with a partition of $M$ parts ${\cal P}_i$ is a set-theoretic representation of an $(n,t,w,e)$-LPECC code.
Formally, an $(n,t,w,e)$-LPECC code of size $M$ is a
set ${\cal B}$ of subsets of $n$-element set $[n]$ with a partition of ${\cal B}$ into ${\cal P}_1,\ldots, {\cal P}_M$ if the following properties hold:
\begin{itemize}
	\item[\textbf(1)] (Property $\textbf{A}(t)$)~~for any $ T\subset [n]$ with $|T|=t$, there exists $B\in \mathcal P_i$ satisfying $B\cap  T=\varnothing$, for all $i\in\{1,2, \ldots,M\}$;
	
	\item[\textbf(2)] (Property $\textbf{B}(w)$)~~$|B|\leq w$ for any $B\in {\cal B}$;

	\item[\textbf(3)] (Property $\textbf{C}(e)$)~~$|B_i\Delta B_j|\geq 2e+1$, for any $1\leq i<j\leq M$, $B_i\in \mathcal P_i$ and $B_j\in \mathcal P_j$, where $B_i\Delta B_j=(B_i\cup B_j)\setminus (B_i\cap B_j)$.
\end{itemize}
If Property $\textbf{B}(w)$ is replaced with Property $\textbf{B'}(w)$: $|B|=w$ for any $B\in {\cal B}$, then an $(n,t,w,e)$-LPECC code is called an $(n,t,w,e)$-CPECC code.
Denote by $C(n,t,w,e)$ (resp. $C'(n,t,w,e)$)  the maximum size in any $(n,t,w,e)$-LPECC code  (resp. $(n,t,w,e)$-CPECC code).

Next, we introduce some combinatorial configurations that will be used in Sections \ref{sec:optimal_CPECC} and \ref{sec:optimal_LPECC}.
An {\em $(n,k,1)$-balanced incomplete block design} (BIBD) is a pair $(V, \mathcal{B})$, where $V$ is an $n$-set of points and $\mathcal{B}$ is a collection of $k$-subsets of $V$ (called blocks) such that each pair of points from $V$ occurs  in exactly one block.
\begin{theorem}[\cite{ha}]\label{thm:(n,5,1)-BIBD}
An $(n,5,1)$-BIBD exists if and only if $n\equiv 1,5\pmod {20}$.
\end{theorem}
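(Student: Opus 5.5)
Theorem~\ref{thm:(n,5,1)-BIBD} is Hanani's classical theorem and is only cited here, but its proof has two parts: necessity by counting, sufficiency by recursive constructions.

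\emph{Necessity.} Suppose an $(n,5,1)$-BIBD $(V,\mathcal B)$ exists with $n>1$. Fix a point $x\in V$. The $n-1$ other points are partitioned by the blocks through $x$, and each such block contains exactly $4$ of them. Hence $r=(n-1)/4$ is an integer, so $n\equiv 1\pmod 4$. Counting pairs gives $|\mathcal B|=\binom{n}{2}/\binom{5}{2}=n(n-1)/20$, so $20\mid n(n-1)$. Together with $n\equiv1\pmod 4$, this forces $n\equiv 0$ or $1\pmod 5$. Combining the two congruences yields $n\equiv 1$ or $5\pmod{20}$.

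\emph{Sufficiency.} I would build the designs recursively, using the standard tools of pairwise balanced designs (PBDs) and group divisible designs (GDDs).
\begin{itemize}
\item First, construct a finite set of small base designs directly. Examples are $n=5$ (a single block), $n=21$ (the projective plane of order $4$), $n=25$ (the affine plane of order $5$), and $n=41,45,61,65,81,85$. For these I would use cyclic or $1$-rotational difference families over $\mathbb Z_n$ or suitable finite fields; for prime $n\equiv1\pmod{20}$ a primitive-root (Wilson-type) difference family works.
\item Next, show that $B(\{5\})$ is PBD-closed. If every block size of a PBD on $n$ points lies in $\{5\}$ together with the already-known admissible orders, then replacing each block by a $(k,5,1)$-BIBD gives an $(n,5,1)$-BIBD.
\item Finally, use Wilson's Fundamental Construction. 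Take a TD$(6,m)$, built from MOLS for all but finitely many $m$. Truncate one group to $u$ points and give weight $4$ to every point. Fill the resulting $5$-GDDs of type $4^5$ and $4^6$, and then fill each group, of size $4m+1$ or $4u+1$ after adding a point at infinity, with smaller $(\cdot,5,1)$-BIBDs. This covers all $n=20m+4u+1$ in the admissible range once $m$ is large.
\end{itemize}

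\emph{Main obstacle.} The recursion does not reach a finite list of small or exceptional orders, for instance values up to a few hundred. The hardest step is to dispatch each of these by an ad hoc direct construction, such as difference families with multipliers or designs with an automorphism group, and confirm that no case is left out.
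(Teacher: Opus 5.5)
Your necessity argument is complete and correct. Your overall shape, counting for necessity and direct plus recursive constructions for sufficiency, is that of Hanani's proof; the paper does not prove this theorem and simply cites it. The genuine gap is the sufficiency half: it is a plan rather than a proof, and it stops where the real content of the theorem lies.

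The recursion you describe is sound. Its ingredients exist: a $5$-GDD of type $4^5$ is a TD$(5,4)$, and one of type $4^6$ is $\mathrm{AG}(2,5)$ with a point deleted. But it only yields $n=20m+4u+1$ under three conditions:
\begin{itemize}
\item $4m+1$ and $4u+1$ must themselves be admissible, so $m,u\equiv 0,1\pmod 5$;
\item $u\le m$;
\item a TD$(6,m)$, or a TD$(5,m)$ when $u=0$, must be available. This rules out $m=6$ and, as far as is known, $m=10$.
\end{itemize}
Writing $n=20k+1$ or $n=20k+5$, you need such an $m$ in $[5k/6,k]$, respectively $[(5k+1)/6,k]$. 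There is none for
$n=125,141,145,161,165,181,185,201,205,281,285$.
None of these orders is in your base list, and nothing in the proposal constructs them. Saying they are to be ``dispatched ad hoc'' is exactly the missing part of the proof.

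Moreover, the one direct method you name is not automatic. Take a prime $q\equiv 1\pmod{20}$ and $\epsilon$ a primitive fifth root of unity. The primitive-root (radical) family, whose base blocks are cosets of the fifth roots of unity, exists only under a cyclotomic condition on $1+\epsilon$. When $(q-1)/20$ is odd, this condition says $1+\epsilon$ must be a nonsquare. It fails for $q=101$ and for $q=181$: in both cases $1+\epsilon$ is a square (e.g.\ $1+\epsilon=60$ is a quadratic residue mod $181$). Wilson's general existence theorem only guarantees a family for astronomically large $q$.

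So even the prime $181$ in the residual list needs an explicitly exhibited difference family. The composite residual orders also need further constructions: $125$ is handled by the lines of $\mathrm{AG}(3,5)$, but $141,145,161,165,185,201,205,285$ require other recursive or direct constructions. To complete the argument you must supply these, or, as the paper does, simply cite Hanani.
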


The following theorem provides the asymptotic existence result of $(n,k,1)$-BIBDs.

\begin{theorem}[\cite{chang_BIBD}]\label{thm:asy_BIBD}
If $n\geq e^{k^{k^{2}}}$, $n(n-1)\equiv 0\pmod {k(k-1)}$ and $n\equiv 1\pmod {k-1}$, then there exists an $(n,k,1)$-BIBD.
\end{theorem}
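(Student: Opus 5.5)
This is a cited result (the explicit form, due to Chang, of Wilson's asymptotic existence theorem), so my plan is to rebuild Wilson's argument and track the constants to obtain the threshold $e^{k^{k^2}}$. Write $B(k)=\{v: \text{an } (v,k,1)\text{-BIBD exists}\}$. The two congruences are exactly the admissibility conditions. The condition $n\equiv 1\pmod{k-1}$ comes from counting the replication number $r=(n-1)/(k-1)$, and $n(n-1)\equiv 0\pmod{k(k-1)}$ comes from counting the $b=n(n-1)/(k(k-1))$ blocks. So the work is entirely sufficiency. I would organise it around pairwise balanced designs: $B(k)$ is PBD-closed, meaning that if a PBD$(v,K)$ exists with every block size in $B(k)$, then $v\in B(k)$, by replacing each block with a BIBD on its points.

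\emph{Step 1 (algebraic seeds).} For prime powers $q\equiv 1\pmod{k(k-1)}$, I would build $(q,k,1)$-difference families in $\mathbb{F}_q$ by the Wilson/Buratti method. One takes base blocks of the form $\{a\omega^{0},a\omega^{e},\dots\}$ with $\omega$ primitive, and needs the differences to hit distinct cyclotomic cosets. Existence of suitable elements follows from Weil's character-sum bound once $q$ exceeds a quantity of order $k^{O(k^2)}$. This gives many $v\in B(k)$, and by Dirichlet-type arguments in every residue class needed. Transversal designs $\mathrm{TD}(k,m)$, equivalently $k-2$ MOLS of order $m$, exist for all $m\ge m_0(k)$ with $m_0$ explicit via MacNeish and the Chowla--Erd\H{o}s--Straus bound.

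\emph{Step 2 (recursion and closure).} Combining these by Wilson's fundamental construction and truncated transversal designs, I would show that $B(k)$ contains a set whose PBD-closure contains every admissible $n$ beyond a threshold. This is the usual route through Wilson's theorem that a PBD-closed set containing enough elements in each admissible residue class mod $k(k-1)$ is eventually periodic. The gcd conditions hold because $B(k)$ contains prime powers congruent to $1$, and the admissible residues mod $k(k-1)$ are generated.

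\emph{Step 3 (explicit bound).} The main obstacle is making every step effective. The qualitative argument above is standard, but the threshold $e^{k^{k^2}}$ requires explicit constants in the Weil-bound step, in the least prime in arithmetic progressions (Linnik-type, or explicit estimates on prime powers $\equiv 1 \bmod k(k-1)$ in short intervals), and in the number of recursive layers. I would bound each stage by $\exp(\mathrm{poly}(k)\cdot k^{k^2-c})$ and check that the final iterated use of the fundamental construction keeps the result below $e^{k^{k^2}}$. In the paper itself this is taken from \cite{chang_BIBD}.
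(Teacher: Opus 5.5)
\textbf{Verdict.} The paper gives no proof of this statement: it is quoted from Chang \cite{chang\_BIBD}. The citation in your last sentence is therefore the entire argument, and to that extent you agree with the paper. Your reconstruction, however, has a genuine gap and would not establish the statement on its own.

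\textbf{The explicit threshold is not proved.} Beyond Wilson's asymptotic existence theorem, the only content of the theorem is the explicit threshold $e^{k^{k^2}}$. Your Step~3 simply asserts that the constants can be tracked so as to land below it. That cannot work as you have set things up. In Step~2 you use Wilson's theorem on PBD-closed sets as a black box, and that theorem supplies only some unspecified threshold $c_k$. To get an explicit bound you would have to redo Wilson's recursive proof effectively. That means fixing the number of recursive layers and the growth per layer. It also means explicit versions of every input:
\begin{itemize}
\item the Weil-bound threshold for the cyclotomic difference families;
\item a Linnik-type bound for prime powers $q\equiv 1\pmod{k(k-1)}$;
\item an explicit Chowla--Erd\H{o}s--Straus threshold for $\mathrm{TD}(k,m)$. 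MacNeish's bound alone does not give $\mathrm{TD}(k,m)$ for all large $m$, e.g.\ for $m\equiv 2\pmod 4$.
\end{itemize}
That bookkeeping is precisely the content of the cited work and is not routine.

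\textbf{The residue-class argument is flawed.} The difference families in $\mathbb{F}_q$ need $q\equiv 1\pmod{k(k-1)}$. So Dirichlet gives you seeds only in that single class, not ``in every residue class needed''. The gcd conditions also do not hold ``because $B(k)$ contains prime powers congruent to $1$''. If every block size of a PBD is $\equiv 1\pmod m$, counting the blocks through a point gives $v\equiv 1\pmod m$. Hence the PBD-closure of such seeds never leaves the class $1 \bmod k(k-1)$. The values $\alpha=k-1$ and $\beta=k(k-1)$ come only from the trivial fact $k\in B(k)$.

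Reaching the other admissible classes needs explicit constructions that use blocks of size $k$. For example, filling the groups of a $\mathrm{TD}(k,m)$ with $m\in B(k)$, $m\equiv 1\pmod{k(k-1)}$, puts $km\equiv k\pmod{k(k-1)}$ into $B(k)$. For general $k$ there are further admissible classes; for $k=6$ they are $1,6,16,21 \pmod{30}$. Each must be reached by a construction whose size you control. Your sketch asserts that ``the admissible residues are generated'' without carrying this out.
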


An $(n,k,1)$-BIBD $(V, \mathcal{B})$ is called {\em resolvable} if its block set $\mathcal{B}$ admits a partition into parallel classes, each being a partition of the point set $V$.

\begin{theorem}[\cite{cd}]\label{thm:rbibd-5}
	An $(n,5,1)$-RBIBD exists for $n\equiv  5 \pmod {20}$, with possible exceptions $n\in \{45, 345, 465, 645\}$.
\end{theorem}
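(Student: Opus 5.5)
Since this statement is quoted from the design-theory literature (\cite{cd}), I would follow the standard route for resolvable designs with block size $k=5$: a direct construction of a finite stock of small ``ingredient'' designs, then recursive constructions that reach every admissible order. The plan has three parts: check that the congruence is necessary, build the ingredients, and run the recursion, leaving only finitely many orders to handle individually.

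\textbf{Necessity.} If an $(n,5,1)$-RBIBD exists, each parallel class partitions the $n$ points into blocks of size $5$, so $5\mid n$. Each point lies in $(n-1)/4$ blocks, so $4\mid n-1$. Together these give $n\equiv 5\pmod{20}$.

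\textbf{Ingredients.}
\begin{itemize}
\item The lines of the affine geometry $AG(m,5)$ form a $(5^m,5,1)$-RBIBD for every $m\ge 1$. This settles $n=5,25,125,\dots$.
\item For further orders I would use the classical cyclotomic method, developing base blocks over $\mathbb{F}_q$ or over $\mathbb{F}_q\times\mathbb{Z}_5$. In its resolvable form this builds one starter parallel class whose translates under a suitable group action give all the parallel classes.
\item I would also construct resolvable transversal designs, equivalently $5$ mutually orthogonal Latin squares. These exist for all sufficiently large group sizes and for all prime powers $q\ge 5$.
\item Finally, I would construct $5$-frames of type $g^u$ for small $g$ and $u$, where each parallel class misses exactly one group. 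Frames are the main recursive building block.
\end{itemize}

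\textbf{Recursion.} The central tool is Stinson's frame construction, a resolvable version of Wilson's fundamental construction.
\begin{itemize}
\item Start from a GDD, or from a TD with some points deleted.
\item Give every point a weight $m$ and replace each block by a $5$-frame of type $m^{|B|}$. This yields a $5$-frame of type $(mg)^u$.
\item Fill each group with an $\infty$-augmented $(mg+1)$-design that is compatible with the resolution, typically by adding a point to each group.
\item This converts the frame into an $(n,5,1)$-RBIBD with $n=umg+1$ or a nearby value, according to the filling.
\end{itemize}
A product-type construction (an RBIBD on $v$ points inflated by a resolvable $TD(5,m)$ together with an RBIBD on $5m$ points) additionally gives closure of the set of orders under a kind of multiplication. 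With Wilson-type weightings one shows that every $n\equiv 5\pmod{20}$ above an explicit threshold, of the order of a few thousand, is covered.

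\textbf{Main obstacle.} The hard step is the finite range below the threshold. There, each residue class must be handled order by order, using specific difference families, frames found by computer search, or ad hoc recursive combinations. The orders that resist every available ingredient are exactly $45,345,465,645$, and these remain as the stated possible exceptions. I would first try to cover this range by tabulating which orders each recursive construction reaches from the ingredients already built, and only then fall back to direct computer constructions for the remaining orders.
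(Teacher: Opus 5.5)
The paper gives no proof of this statement. It quotes the result from the Handbook \cite{cd}, where it summarizes a long line of work combining direct constructions (difference families, computer-found frames) with Stinson/Wilson-type recursions. Your outline has that same architecture, and the pieces you actually argue are correct:
the necessity computation, which the statement does not need;
the lines of $AG(m,5)$;
and the filling step. There, a $5$-frame of type $h^u$, plus an $(h+1,5,1)$-RBIBD placed on $G\cup\{\infty\}$ for every group $G$ with one common new point $\infty$, yields an $(hu+1,5,1)$-RBIBD. This works because the $h/4$ parallel classes of each filled design pair off with the $h/4$ holey classes belonging to $G$.

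As a proof, however, there is a genuine gap: everything that makes the theorem true is only described, never carried out.
\begin{itemize}
\item You never name the GDDs, frames or difference families that serve as ingredients.
\item You never establish a threshold above which your recursions provably reach every $n\equiv 5\pmod{20}$.
\item You treat none of the orders below that threshold.
\item The sentence that the resistant orders are ``exactly $45,345,465,645$'' is read off from the statement, not derived. Which orders a recursion reaches depends on which ingredient designs have actually been built, so the exception list is an output of the case analysis, not something the outline can predict. The theorem makes no claim about those four orders; what it does require is an explicit construction for every other admissible order, and none is supplied.
\end{itemize}

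There are also two smaller errors.
\begin{itemize}
\item A resolvable $TD(5,m)$ is equivalent to a $TD(6,m)$, that is, to $4$ mutually orthogonal Latin squares of order $m$, not $5$. Your claim that these exist for all prime powers $q\ge 5$ is true for resolvable $TD(5,q)$ but false for $5$ MOLS, since there are no $5$ MOLS of order $5$.
\item The product construction as you state it does not work. The standard version puts a resolvable $TD(5,m)$ on $B\times W$ for each block $B$ of a $(v,5,1)$-RBIBD, and an $(m,5,1)$-RBIBD on each fibre $\{x\}\times W$. This gives a $(vm,5,1)$-RBIBD; no RBIBD on $5m$ points enters.
\end{itemize}

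In the context of this paper the correct justification is the citation itself. A self-contained proof would have to reproduce the explicit ingredient designs and case tables of the literature.
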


\begin{theorem}[\cite{gr}]\label{thm:rbibd-7}
	An $(n,7,1)$-RBIBD exists for $n\equiv  7 \pmod {42}$ and $n\geq 294469$.
\end{theorem}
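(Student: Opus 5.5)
Since this statement is quoted from \cite{gr} and not proved in the paper, what follows is only how I would approach a proof of this type, by recursive design theory; I have not checked the details. I would write $n=42m+7=7(6m+1)$, so the target is an $(n,7,1)$-RBIBD with $r=(n-1)/6=7m+1$ parallel classes, each containing $n/7$ blocks. The strategy is the usual three-stage one: build a stock of small ingredient designs directly, push them up to all large admissible $n$ by PBD-type recursions, and then make the threshold explicit.

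\textbf{Stage 1: direct ingredients.} I would construct by hand or by computer:
\begin{itemize}
\item $(n,7,1)$-RBIBDs for some small admissible $n$, using resolvable difference families over $\mathbb{F}_q$ with $q$ a prime power. Cyclotomic base blocks of the form $\{x^{i}\omega^{j}\}$ should give resolvability through the multiplicative structure.
\item Resolvable $7$-frames, i.e.\ resolvable $\{7\}$-GDDs of type $42^u$ or $7^u$, for several small group sizes.
\item Transversal designs $\mathrm{TD}(8,g)$, or equivalently resolvable $\mathrm{TD}(7,g)$, coming from $7$ mutually orthogonal Latin squares. These exist for all prime powers $g\ge 7$ and, by known MOLS bounds, for all $g$ beyond an explicit constant.
\end{itemize}

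\textbf{Stage 2: recursion.} I would apply Wilson's fundamental construction to a truncated $\mathrm{TD}(8,g)$, giving each point weight $42$ or $7$. The holes are filled with the Stage 1 frames and with a smaller RBIBD, and the parallel classes are assembled from the resolutions of the ingredient frames. This yields a product-type statement: roughly, if an RBIBD on $7(6a+1)$ points exists together with suitable frames and a $\mathrm{TD}(8,g)$ truncated to $x$ points in the last group, then an RBIBD exists on about $7(6(ga+x)+1)$ points. Equivalently, I would show that the set of admissible $m$ is essentially PBD-closed with respect to a finite generating set of block sizes. Lenz's/Wilson's closure theorem then gives that every sufficiently large $m$ is covered.

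\textbf{Stage 3: explicit threshold, the main obstacle.} Closure theorems are ineffective, so the bound $294469$ must come from concrete number-theoretic covering. I would take $g$ ranging over prime powers (or values with known $7$ MOLS) in consecutive intervals, let the truncation $x$ and the weights vary over ranges where ingredient frames exist, and check that the resulting arithmetic progressions of $m$ overlap from some point on. To do this I would first tabulate which frame types and small RBIBDs are actually available, then run a computer search over $(g,x)$ pairs to locate the smallest $m$ after which the union has no gaps. I expect the difficulty to lie in the scarcity of small ingredients for $k=7$. Each missing frame type creates gaps, which is presumably why the theorem leaves everything below $294469$ open.
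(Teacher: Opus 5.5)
The paper gives no proof of this statement. It is quoted from \cite{gr} and used only as a black box in Lemma~\ref{lem:CPECC_construct_2}, so there is no argument to match. Judged on its own, your text is a research plan rather than a proof, and what is missing is precisely the theorem's content. Stage~2, done correctly, yields only the Ray-Chaudhuri--Wilson asymptotic statement: an RBIBD for all \emph{sufficiently large} $n\equiv 7\pmod{42}$, with an unspecified or astronomically large threshold of the kind in Theorem~\ref{thm:asy_RBIBD}. The explicit value $294469$ can only come from the concrete list of ingredient designs and the covering computation in Stages~1 and~3, and you supply neither. Nothing in the proposal rules out a missing value anywhere above $294469$.

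Several of the steps you do specify would also fail as written.
\begin{itemize}
\item A $(q,7,1)$ difference family over $\mathbb{F}_q$ develops into a design on $q$ points and requires $q\equiv 1\pmod{42}$. Then $7\nmid q$, so the design has no parallel class at all. The only prime powers $q\equiv 7\pmod{42}$ are $q=7^a$, and there the resolvable designs are the affine geometries $AG(a,7)$, not cyclotomic ones.
\item A $7$-frame of type $g^u$ needs $6\mid g$, because each hole carries $g/6$ holey parallel classes. So frames of type $7^u$ do not exist; you are conflating frames with resolvable GDDs.
\item The recursion that actually drives such proofs is this: a $7$-frame of type $(42s)^u$ together with a $(42s+7,7,1)$-RBIBD gives a $(42su+7,7,1)$-RBIBD. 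You adjoin $7$ ideal points to every hole and fill each hole so that the ideal points form a block.
\item What is PBD-closed is the set of $u$ admitting frames of type $(42s)^u$, not the set of admissible $m$ itself. Wilson's theorem applied to that set again gives only ``sufficiently large.''
\end{itemize}
To turn the sketch into a proof you would have to exhibit the specific small RBIBDs, frames and truncated transversal designs. You would then have to verify that the resulting progressions cover every $n\equiv 7\pmod{42}$ with $n\ge 294469$; that verification is exactly what \cite{gr} supplies.
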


The following theorem shows the asymptotic existence result of a resolvable balanced incomplete block design.

\begin{theorem}[\cite{chang}]\label{thm:asy_RBIBD}
There exists an $(n,k,1)$-RBIBD for any integer $n>e^{e^{k^{12k^2}}}$ with $n\equiv 0\pmod {k}$ and $n\equiv 1\pmod {k-1}$.
\end{theorem}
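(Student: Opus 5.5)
Since this theorem is quoted from \cite{chang}, I would reproduce the classical Ray-Chaudhuri--Wilson strategy for resolvable designs and track the constants explicitly.

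The first step is to reformulate resolvability as a pairwise balanced design. Put $r=(n-1)/(k-1)$. An $(n,k,1)$-RBIBD is equivalent to a PBD on $n+r$ points in which one block has size $r$ and all other blocks have size $k+1$. To go forwards, adjoin one new point $\infty_j$ to every block of the $j$-th parallel class, and take $\{\infty_1,\dots,\infty_r\}$ as the long block. To go backwards, delete that long block. The conditions $n\equiv 0\pmod k$ and $n\equiv 1\pmod{k-1}$ are exactly the divisibility conditions making this PBD admissible. So it suffices to construct such PBDs, or equivalently resolvable group divisible designs and frames, for all large admissible $n$.

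The second step supplies the ingredients from finite fields.
\begin{itemize}
\item For a prime power $q\equiv 1\pmod{k-1}$, an affine geometry $AG(d,q)$ with $k\mid q$ gives resolvable designs directly.
\item More flexibly, resolvable transversal designs $\mathrm{RTD}(k,m)$ come from $k-1$ mutually orthogonal Latin squares. These exist for all $m$ beyond an explicit bound, either from Chowla-type MOLS bounds or from direct product constructions with prime powers.
\item Small resolvable ``seed'' designs come from cyclotomic difference families in $\mathbb F_q$ for primes $q\equiv 1\pmod{k(k-1)}$. By Weil's bound on character sums, these exist once $q$ exceeds roughly $k^{ck^2}$.
\end{itemize}

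The third step is recursion by Wilson's fundamental construction. Inflating a resolvable master GDD by resolvable TD ingredients and filling groups (frames plus RBIBDs on the group sizes plus the infinite points) gives: if $v$ lies in a set $R$ of admissible orders closed under these operations, then so do $mv+c$ for suitable ranges of $m$ and $c$. Using two seeds of coprime offsets in each residue class modulo $k(k-1)$, a Frobenius/Chinese-remainder argument shows that every admissible $n$ beyond an explicit threshold is representable. Making each ingredient's existence bound explicit then yields a threshold of the form $e^{e^{k^{12k^2}}}$.

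I expect the main obstacle to be this last step: the quantitative bookkeeping that obtains seeds in every residue class with explicit size bounds, and keeps the recursive filling resolvable. I would handle it first by carrying resolvability entirely through frames, so that only RTDs and small RBIBDs are needed as fillers. Then I would bound the smallest usable prime $q\equiv 1\pmod{k(k-1)}$ by Linnik-type estimates combined with the Weil bound. The double exponential comes from iterating the construction once more to cover all residues.
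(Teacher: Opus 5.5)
The paper does not prove this statement. It is imported verbatim from \cite{chang} and used only as a black box (via Theorem~\ref{thm:CPECC_construct_1} in Corollary~\ref{cor:coj_solve_2}), which is the appropriate treatment. Judged as a proof, your proposal has a genuine gap.

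Your first step is fine. Adjoining $\infty_j$ to every block of the $j$-th parallel class gives a PBD on $n+r$ points with one block of size $r$ and all other blocks of size $k+1$. Conversely, each point off the long block lies in exactly $(n+r-1)/k=r$ blocks, each meeting the long block once, so deleting it recovers the resolution. However, the entire content of the statement is the explicit threshold $e^{e^{k^{12k^2}}}$, and nothing in your text produces it. The sentence ``making each ingredient's existence bound explicit then yields a threshold of the form $e^{e^{k^{12k^2}}}$'' just restates the conclusion, and attributing the double exponential to ``one more iteration'' is a guess. A proof would need explicit bounds for every ingredient: the MOLS threshold, the least usable prime power in the relevant progression, the cyclotomy/Weil threshold, and the point beyond which your covering argument reaches every $n\equiv k\pmod{k(k-1)}$. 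It would then need an actual computation showing these combine to at most $e^{e^{k^{12k^2}}}$.

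Even the qualitative Ray-Chaudhuri--Wilson part is not secured, because two of your three seed ingredients do not do what you need.
\begin{itemize}
\item Lines of $AG(d,q)$ have size $q$. If $q$ is a prime power with $k\mid q$, then $k$ is a power of the same prime, so this gives nothing for, e.g., $k=6$. Even for prime-power $k$, you get block size $k$ only with $q=k$ (or by already having a $(q,k,1)$-RBIBD).
\item A difference family in $\mathbb F_q$ with $q\equiv 1\pmod{k(k-1)}$ yields a $(q,k,1)$-BIBD on $q\equiv 1\pmod k$ points. Such a design can never be resolvable, since a parallel class requires $k\mid q$. So these are not ``resolvable seeds''.
\end{itemize}
The finite-field input has to be resolvable from the start. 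One example is a $(k,1)$-frame of type $(k-1)^q$ over $\mathbb F_q$ with $q\equiv 1\pmod k$: adjoining a point $\infty$ and the blocks $G\cup\{\infty\}$ for each group $G$ gives a $((k-1)q+1,k,1)$-RBIBD. It is for objects of this kind that cyclotomic conditions and the Weil bound must be checked.

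Finally, your recursion is stated only as ``$mv+c$ for suitable $m$ and $c$''. You need a precise closure statement specifying which frames and RTDs fill which groups so that the result stays resolvable. Your covering argument also speaks of ``each residue class modulo $k(k-1)$'', although all admissible $n$ lie in the single class $n\equiv k\pmod{k(k-1)}$. As written, this is a research plan rather than a proof; for this paper the correct move is simply to cite \cite{chang}.
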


Given two graphs $G$ and $H$, a graph family $\mathcal F=\{F_1,F_2,\dots, F_b\}$ is called an {\em $H$-packing of $G$}, if each $F_i$ is a copy of $H$ in $G$ for $i\in[b]$ and $F_1,F_2,\dots, F_b$ are pairwise edge-disjoint. The $H$-\textit{packing number} of $G$, denoted by $P(H,G)$, is the maximum cardinality of an $H$-packing of $G$. An {\em $(n, k, 1)$-packing} is a $K_k$-packing of $K_n$ and each copy of $K_k$ in an $(n, k, 1)$-packing is called a {\em block}. The following lemma shows the first Johnson bound of an $(n, k, 1)$-packing.

\begin{lemma}[\cite{cd}]\label{lem:J_bound}
Each $(n,k,1)$-packing with $b$ blocks satisfies that $b\leq \big\lfloor\frac{n}{k}\big\lfloor\frac{n-1}{k-1}\big\rfloor\big\rfloor$.
\end{lemma}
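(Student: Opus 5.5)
The plan is to use a double-counting argument on points and then on blocks, which is the standard way the first Johnson bound is proved. Let $\mathcal F$ be an $(n,k,1)$-packing with $b$ blocks, viewed as a family of $k$-subsets of an $n$-set $V$ in which every pair of points lies in at most one block.

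First we bound the number of blocks through a single point. Fix a point $x\in V$ and let $r_x$ be the number of blocks containing $x$. Each such block contains exactly $k-1$ points other than $x$. Two distinct blocks through $x$ cannot share another point $y$, since otherwise the pair $\{x,y\}$ would lie in two blocks, contradicting edge-disjointness of the copies of $K_k$. So the sets $B\setminus\{x\}$, over blocks $B\ni x$, are pairwise disjoint subsets of $V\setminus\{x\}$. This gives $r_x(k-1)\le n-1$, and since $r_x$ is an integer, $r_x\le \lfloor \frac{n-1}{k-1}\rfloor$.

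Next we double count incidences between points and blocks. Each block has $k$ points, so
\begin{equation*}
bk=\sum_{x\in V} r_x\le n\Big\lfloor\frac{n-1}{k-1}\Big\rfloor .
\end{equation*}
Dividing by $k$ gives $b\le \frac{n}{k}\lfloor\frac{n-1}{k-1}\rfloor$. Because $b$ is an integer, taking the floor yields the stated bound.

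No step is a real obstacle. The only point requiring care is the disjointness argument, which uses precisely the packing condition that each pair lies in at most one block. That condition is the edge-disjointness of the copies of $K_k$ in $K_n$.
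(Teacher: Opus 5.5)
Your proposal is correct. It is the standard double-counting proof of the first Johnson bound: disjointness of the sets $B\setminus\{x\}$ gives the per-point bound $r_x\le\lfloor\frac{n-1}{k-1}\rfloor$, incidence counting then bounds $b$, and integrality supplies the outer floor (with $k\ge 2$ implicitly assumed). The paper does not prove this lemma itself; it cites it from \cite{cd}, where it is established by essentially this argument.
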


For a graph $G$, let $d_G(v)$ be the {\em degree} of the vertex $v$ in $G$, $\gcd(G)=\gcd\{d_G(v):v\in V(G)\}$ the greatest common divisor of degrees of the vertices in $G$, $v(G)$  and $e(G)$ the number of vertices and edges in $G$ respectively. The following lemma provides a result of an $H$-packing number of $K_n$ $P(H,K_n)$.

\begin{theorem}[\cite{BKLO16,CY97,GKO20}]\label{thm:graph-packing}
    Let $H$ be a finite simple graph without isolated vertices. For sufficiently large $n$, it holds that
    \[
    P(H,K_n)=\left\lfloor \frac{n\gcd(H)}{2e(H)} \left\lfloor \frac{n-1}{\gcd(H)} \right\rfloor \right\rfloor,
    \]
    unless $\gcd(H)\mid n-1$ and $n(n-1)/\gcd(H) \equiv b \pmod{2e(H)/\gcd(H)}$ for some $1\le b\le \gcd(H)$, in which case
    \[
    P(H,K_n)=\left\lfloor \frac{n\gcd(H)}{2e(H)} \left\lfloor \frac{n-1}{\gcd(H)} \right\rfloor \right\rfloor-1.
    \]
\end{theorem}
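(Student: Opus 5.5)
The statement is cited from \cite{BKLO16,CY97,GKO20}. The plan is a matching upper bound from divisibility plus a lower bound from an exact decomposition theorem applied to $K_n$ minus a small leftover. Write $g=\gcd(H)$, $m=e(H)$, and $r\equiv n-1 \pmod g$ with $0\le r<g$.

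\textbf{Upper bound.} Take an $H$-packing $\mathcal F$ of $K_n$ and let $L$ be the leftover graph $K_n-\bigcup\mathcal F$.
\begin{itemize}
\item Every copy of $H$ contributes a multiple of $g$ to the degree of each vertex it contains. Hence $d_L(v)\equiv n-1\pmod g$, so $d_L(v)\ge r$ for every $v$.
\item Therefore $|\mathcal F|\,m=\binom n2-e(L)$, and every vertex has packing degree at most $g\lfloor (n-1)/g\rfloor$.
\item Summing the packing degrees gives $2m|\mathcal F|\le n g\lfloor (n-1)/g\rfloor$, which is the floor formula.
\end{itemize}
For the exceptional case, assume $g\mid n-1$, so $L$ has all degrees divisible by $g$, and that the formula were attained. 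Then $e(L)=\binom n2-m\lfloor\cdot\rfloor$, which is congruent to $b\cdot g/2$-type values with $1\le b\le g$. This forces $L$ to be a nonempty graph with fewer than roughly $g^2/2$ edges in which every nonzero degree is at least $g$. Such a graph has at least $g+1$ vertices of positive degree and hence at least $g(g+1)/2$ edges, which contradicts the edge count. So we lose at least one more copy. I will check this edge-count contradiction carefully against the congruence $n(n-1)/g\equiv b \pmod{2m/g}$.

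\textbf{Lower bound.}
\begin{itemize}
\item Construct an explicit leftover graph $L$ on $V(K_n)$ with $O(1)$ edges depending only on $H$. It must satisfy $d_L(v)\equiv n-1 \pmod g$ for all $v$, $e(L)\equiv\binom n2\pmod m$, and $e(L)$ equal to the minimum allowed by the upper bound.
\item For $r>0$, take $L$ to be an (almost) $r$-regular graph plus a few extra edges arranged in short cycles of length adjusted mod $g$. The extra edges fix the residue mod $m$ while keeping degrees in the right classes; this is possible because adding a cycle through $g$ vertices... more precisely, a disjoint union of $K_{g+1}$'s shifts edge counts while preserving degrees mod $g$.
\item Then $G=K_n-L$ is $H$-divisible: $g\mid d_G(v)$ for all $v$ and $m\mid e(G)$. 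It also has minimum degree $n-O(1)$.
\item By the Glock--K\"uhn--Lo--Osthus (or Keevash) decomposition theorem, every $H$-divisible graph on $n$ vertices with $\delta\ge(1-\varepsilon_H)n$ has an $H$-decomposition when $n$ is large. This yields a packing of size $(\binom n2-e(L))/m$, matching the bound, or the bound minus one in the exceptional case, where $L$ is taken as $K_{g+1}$ plus the appropriate edges.
\end{itemize}

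\textbf{Main obstacle.} The decomposition theorem itself is deep, and I will assume it. The real work is the number-theoretic case analysis in constructing $L$: showing that a leftover of exactly the minimal size exists in every residue situation except the stated one. I would first handle $g=1$ (where $L$ is any graph with $e(L)\equiv\binom n2\pmod m$ and $e(L)<m$) and then reduce the general case by superimposing a near-$r$-factor on that construction.
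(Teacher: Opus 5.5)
The paper does not prove this statement; it is quoted from the cited works. Your architecture matches theirs: a divisibility upper bound, then a bounded-degree leftover $L$ with $K_n-L$ being $H$-divisible, then a Gustavsson-type decomposition theorem for $\delta\ge(1-\varepsilon_H)n$. Your upper bound, including the exceptional case, is correct.

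The gap is in constructing $L$, which is the only part of the lower bound not covered by the black box, and the mechanisms you describe fail. Write $n-1=qg+r$ with $0\le r<g$. Let $c$ be the least nonnegative residue of $nq$ modulo $2m/g$; this is an integer because $g$ divides every degree of $H$. For the bound $U$ one has $nq=(2m/g)U+c$. So attaining $U$ forces $e(L)=\binom n2-mU=(nr+cg)/2$ with every $d_L(v)\equiv r \pmod g$. In other words, $L$ has the degree of an $r$-factor plus exactly $c$ extra units of $g$. In particular $e(L)\ge nr/2$, so what you want is $\Delta(L)=O(1)$, not $O(1)$ edges.

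Your route splits $L$ into an $r$-factor plus an edge-disjoint gadget whose degrees all lie in $g\mathbb{Z}$. That gadget must have $cg/2$ edges, and no such graph exists for $1\le c\le g$, by the same count you used in the upper bound. So this route manufactures an obstruction when $r\ge 1$, where the theorem says there is none. Concretely, take $H=K_3$ and $n=10$: then $g=2$, $r=1$, $q=4$, $c=1$. Here $U=13$ is attained with leftover $K_{1,3}$ plus a perfect matching on the other six vertices, which cannot be split into a $1$-factor and a graph with all degrees even. The other mechanisms fail too: disjoint copies of $K_{g+1}$ change $c$ only by multiples of $g+1$, and superimposing an $r$-factor on the $g=1$ construction destroys the congruences mod $g$ once $g\ge 2$.

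The fix is to prescribe the degree sequence of $L$ directly: $c$ vertices of degree $r+g$ and $n-c$ of degree $r$. The sum is $nr+cg=n(n-1)-2mU$, which is even.
\begin{itemize}
\item If $r\ge 1$, all entries lie in $[1,r+g]$, so the Erd\H{o}s--Gallai conditions hold once $n$ is large, and no exception ever arises.
\item If $r=0$, you need a $g$-regular graph on $c$ vertices, which exists iff $c=0$ or $c\ge g+1$. Since now $nq=n(n-1)/g$ and $g<2m/g$, the failing range $1\le c\le g$ is exactly the stated exceptional condition. There, a $g$-regular graph on $c+2m/g\ge g+2$ vertices works (note $(c+2m/g)g$ is even), and it yields $U-1$ copies.
\end{itemize}
In every case $\delta(K_n-L)\ge n-1-r-g$, so the decomposition theorem applies. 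With this case analysis made explicit, your argument is complete.
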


We conclude this section with the following simple yet useful inequality.
\begin{lemma}\label{lem:inequality_1}
Let $\ell$ and $w$ be integers with $\ell\geq0$ and $w\geq 2$. Then
\begin{align*}
    \frac{\ell-w+2}{\ell+1} \leq \frac{\ell}{2(2w-3)}.
\end{align*}
\end{lemma}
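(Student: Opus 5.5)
We need to prove that $\frac{\ell-w+2}{\ell+1}\le \frac{\ell}{2(2w-3)}$ for all integers $\ell\ge 0$ and $w\ge 2$. Since $\ell+1>0$ and $2w-3\ge 1>0$, we clear denominators. The inequality is then equivalent to
\[
2(2w-3)(\ell-w+2)\le \ell(\ell+1),
\]
that is, to showing that the quadratic
\[
f(\ell)=\ell^2+\ell-2(2w-3)\ell+2(2w-3)(w-2)=\ell^2-(4w-7)\ell+2(2w-3)(w-2)
\]
is nonnegative at every integer $\ell\ge 0$.

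The main step is to factor $f$. The discriminant is
\[
(4w-7)^2-8(2w-3)(w-2)=16w^2-56w+49-(16w^2-56w+48)=1,
\]
so the roots are $\frac{(4w-7)\pm 1}{2}$, namely $2w-3$ and $2w-4$. Hence
\[
f(\ell)=(\ell-(2w-3))(\ell-(2w-4)).
\]
The two roots are consecutive integers, so no integer lies strictly between them. For any integer $\ell$, the two factors are therefore both nonnegative, both nonpositive, or one of them is zero. In every case $f(\ell)\ge 0$, which gives the claim.

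There is no real obstacle: the only point needing care is noticing that the discriminant equals $1$, which is what makes the bound hold exactly on the integers. Equality holds at $\ell\in\{2w-4,2w-3\}$. The hypothesis $w\ge 2$ is used only to ensure $2w-3>0$, so the direction of the inequality is preserved when clearing denominators.
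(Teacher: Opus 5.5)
Your proof is correct and is essentially the paper's argument. The paper also writes $\ell(\ell+1)-2(2w-3)(\ell-w+2)=(\ell-(2w-4))(\ell-(2w-3))$, notes it is nonnegative since no integer lies strictly between the consecutive roots, and divides by the positive quantities $\ell+1$ and $2(2w-3)$; the only difference is that you find the factorization through the discriminant rather than stating it directly.
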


\begin{proof}
For every integer $\ell$, exactly one of the inequalities $\ell\geq 2w-3$ and $\ell \leq 2w-4$ holds, and so
\begin{align*}
    0 \leq (\ell-(2w-4))(\ell-(2w-3)) = \ell(\ell+1)-2(2w-3)(\ell-w+2).
\end{align*}
Since $\ell +1>0$ and $2w-3>0$,
$\frac{\ell-w+2}{\ell+1} \leq \frac{\ell}{2(2w-3)}.$\qed
\end{proof}

\section{Optimal $(n,t,w,w-2)$-CPECC codes}\label{sec:optimal_CPECC}
Zhao and Zhang \cite{z-z} showed that if $2\binom{w}{w-e}\geq \binom{w-t-1}{w-e}+\binom{w+t}{w-e}$ and $\binom{w+t-i}{w-e-i}|\binom{n-i}{w-e-i}$ for $0\leq i\leq w-e-1$, then
 $\lim\limits_{n\rightarrow\infty}\frac{C'(n,t,w,e)}{\binom{n}{w-e}/\binom{w+t}{w-e}}=1$. In this section, we first establish a better upper bound for $(n,1,w,w-2)$-CPECC codes. We also resolve the conjecture in \cite{lyj} using graph-theoretic methods.

\subsection{Optimal $(n,1,w,w-2)$-CPECC codes}\label{subsec:(n,1,w,w-2)-CPECC}

In this subsection, we investigate the optimal $(n,1,w,w-2)$-CPECC codes.
This subsection provides a new upper bound of $(n,1,w,w-2)$-CPECC codes and constructs a family of $(n,1,w,w-2)$-CPECC codes attaining the new upper bound for $w\geq4$.
Since the existence of optimal $(n,1,3,1)$-CPECC codes has been completely solved in \cite{lj}, we only consider the case $w\geq 4$ in this subsection.

Let $\mathcal B=\mathcal P_1\cup\dots\cup\mathcal P_M$ be the set-theoretic representative of an $(n,1,w,w-2)$-CPECC code, and let $G_i=(V_i,E_i)$ be a graph for every $i\in[M]$, where $V_i=\bigcup_{B\in\mathcal P_i}B$ and $E_i=\bigcup_{B\in\mathcal P_i}\{\{u,v\}:u,v\in B,u\neq v\}$.
It follows from Property $\textbf{C}(e)$ that for any $1\leq i<j\leq M$ and $B_i\in \mathcal P_i$, $B_j\in \mathcal P_j$,
\begin{align*}
	|B_i\cap B_j| = \frac{|B_i|+|B_j|-|B_i \Delta B_j|}{2} \leq \frac{2w-(2(w-2)+1)}{2}=\frac{3}{2},
\end{align*}
which implies that any two $w$-subsets from different $\mathcal{P}_i$'s have at most one common point.
So $E_1,E_2,\dots,E_M$ are pairwise disjoint. Recall that $e(G_i)=|E_i|$ for every $i\in[M]$.
Therefore $M\leq\frac{\binom{n}{2}}{\min_{i\in[M]}e(G_i)}$. Thus, to obtain an upper bound of $(n,1,w,w-2)$-CPECC codes, it suffices to determine $\min_{i\in[M]}e(G_i)$.

The following lower bound of $\min_{i\in[M]}e(G_i)$ is obtained by Zhao and Zhang (see \cite[Claim 5]{z-z}).
\begin{lemma}[\cite{z-z}]\label{lem:e(P_i)}
For an $(n,t,w,e)$-CPECC code $\mathcal B=\mathcal P_1\cup\dots\cup\mathcal P_M$, it holds that $e(G_i)\geq{w+t\choose w-e}.$
\end{lemma}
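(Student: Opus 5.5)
The plan is to reformulate the statement as an extremal problem on the shadow of a uniform set family. Fix $i$ and write $\mathcal F=\mathcal P_i$, a family of $w$-subsets of $[n]$. For general $e$, $e(G_i)$ should be read as the number of $(w-e)$-subsets of $[n]$ contained in some block of $\mathcal F$, i.e.\ the size of the $(w-e)$-shadow $\partial_{w-e}\mathcal F$. For $e=w-2$ this is exactly the edge set $E_i$ defined above. Property $\textbf{A}(t)$ says precisely that no $t$-set meets every block of $\mathcal F$, i.e.\ the transversal number satisfies $\tau(\mathcal F)\ge t+1$. The extremal example is $\mathcal F=\binom{S}{w}$ with $|S|=w+t$, which has $\tau=t+1$ and shadow $\binom{w+t}{w-e}$. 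So it suffices to prove the following.

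\emph{Claim.} If $\mathcal F$ is $w$-uniform with $\tau(\mathcal F)\ge t+1$, then $|\partial_k\mathcal F|\ge\binom{w+t}{k}$ for every $0\le k\le w$.

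I will prove the Claim by induction on $w+t$ (and on $k$). The base cases are easy. If $t=0$, then $\mathcal F\neq\varnothing$ and one block already gives $\binom{w}{k}$ $k$-subsets. If $k=0$, the bound is $1$.

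For the inductive step, pick a vertex $x$ lying in some block and split $\partial_k\mathcal F$ into the $k$-sets avoiding $x$ and those containing $x$. The subfamily $\mathcal F-x$ of blocks avoiding $x$ has $\tau\ge t$, since a $(t-1)$-transversal of it together with $x$ would be a $t$-transversal of $\mathcal F$. Hence, by induction, the $k$-sets avoiding $x$ number at least $\binom{w+t-1}{k}$. The $k$-sets containing $x$ correspond to the $(k-1)$-shadow of the link $\mathcal F(x)=\{B\setminus\{x\}:x\in B\in\mathcal F\}$, which is $(w-1)$-uniform. If $\tau(\mathcal F(x))\ge t+1$, the induction hypothesis with $(w-1,t)$ gives at least $\binom{w+t-1}{k-1}$ such sets. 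Pascal's rule then yields $\binom{w+t}{k}$.

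The main obstacle is guaranteeing that $\tau(\mathcal F(x))\ge t+1$ for a suitable choice of $x$. A small transversal $T$ of the link need not extend to a transversal of $\mathcal F$, since the blocks avoiding $x$ must also be hit. My first attempt is to replace the naive split by a shifting (compression) argument. Left-compressions $S_{ij}$ do not increase shadows, and I would check that they also do not decrease $\tau$; this is the standard fact that shifting preserves non-coverability by $t$-sets. After shifting, $\mathcal F$ is a left-shifted family. I then take $x=1$: in a shifted family, a $t$-set $T$ hitting the link of $1$ can be shifted towards small elements while still hitting all blocks. This should give the needed $\tau(\mathcal F(1))\ge t+1$, or at worst a direct Kruskal--Katona-type comparison with $\binom{[w+t]}{w}$.

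If this route becomes messy, the fallback is the direct argument for $k=2$, $e=w-2$, which is the only case used later. There, $G_i$ is a graph whose edge set is a union of $K_w$'s and which contains no $t$-set meeting all of these cliques. One can show $e(G_i)\ge\binom{w+t}{2}$ by deleting a vertex of maximum degree, which is at least $w-1+t$ by a counting argument. Induction on $t$ then gives $e\ge\binom{w+t-1}{2}+(w+t-1)$.
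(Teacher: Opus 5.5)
Reading $e(G_i)$ as the $(w-e)$-shadow of $\mathcal P_i$ is the sensible interpretation. However, the Claim you reduce to is false, so none of your three routes can be completed. Let $\mathcal F$ consist of $t+1$ pairwise disjoint $w$-sets. Every $t$-set misses one of them, so $\tau(\mathcal F)=t+1$. Moreover $\mathcal F$ is an admissible class $\mathcal P_i$, since Property~$\textbf{C}(e)$ only constrains blocks from different classes. Yet $|\partial_k\mathcal F|=(t+1)\binom{w}{k}$. For $k=w\ge 2$ this is $t+1<\binom{w+t}{w}$, and for $k=2$, $w=3$, $t=2$ it is $9<\binom{5}{2}$.

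Each of your steps breaks on this example:
\begin{itemize}
\item Every link $\mathcal F(x)$ is a single $(w-1)$-set, so its covering number is $1$, not $t+1$.
\item Compressions do not preserve $\tau\ge t+1$. For instance, $S_{13}$ turns $\{\{1,2\},\{3,4\}\}$ into $\{\{1,2\},\{1,4\}\}$, lowering $\tau$ from $2$ to $1$. The standard monotonicity under shifting concerns the matching number, which never increases, not the covering number.
\item In the fallback, the ``counting argument'' giving a vertex of degree at least $w-1+t$ is wrong: three disjoint triangles ($w=3$, $t=2$) have maximum degree $2$.
\end{itemize}

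In fact the lemma, as transcribed in this paper, is false for general $t$, so no proof can exist. Take the four parallel classes of the affine plane $AG(2,3)$, each class consisting of three disjoint lines. They form a $(9,2,3,1)$-CPECC code of size $4$; this is Theorem~\ref{thm:CPECC_construct_1} with $w=3$, $l=3$. Here $e(G_i)=9<10=\binom{5}{2}$ for every $i$, whereas the lemma together with edge-disjointness would force $4\cdot 10\le\binom{9}{2}=36$.

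The bound $\binom{w+t}{w-e}$ needs an extra hypothesis restricting $t$ relative to $w$ and $e$. Presumably this is the condition $2\binom{w}{w-e}\ge\binom{w-t-1}{w-e}+\binom{w+t}{w-e}$ attached to Zhao and Zhang's result quoted at the start of Section~\ref{sec:optimal_CPECC}. Under it, two blocks $A,B$ meeting in at most $w-t-1$ points already have a $(w-e)$-shadow of size $2\binom{w}{w-e}-\binom{|A\cap B|}{w-e}\ge\binom{w+t}{w-e}$. A correct argument must use such a hypothesis, for example by splitting on whether such a pair exists; your argument uses none.

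Note also that the paper does not prove this lemma itself. The only instance it needs ($t=1$, $e=w-2$, where the condition reduces to $w\ge 3$) is proved directly in Lemma~\ref{lem:stability}. For general $t$ with $e=w-2$, the correct bound is $e(G_i)\ge(2w-3)(t+1)$ from Lemma~\ref{lem:CPECC_e(G_i)}. That bound is attained by $l$ disjoint copies of $K_{2w-3}$ when $t+1=l(w-2)$, and it falls below $\binom{w+t}{2}$ for large $t$.
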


Lemma \ref{lem:stability} gives a characterization of $G_i$ attaining the lower bound in Lemma \ref{lem:e(P_i)} and a stability result. To prove Lemma \ref{lem:stability}, we need the following lemma.

\begin{lemma}\label{lem:key-lemma}
		Let $G$ be a graph with clique number $\kappa(G)=w$ and let $\mathcal{F}$ be a non-empty family of the largest cliques of $G$, where clique number of G is the size of the largest cliques of $G$. Then
		$$\Big| \bigcap_{F\in\mathcal{F}} F \Big| + \Big| \bigcup_{F\in\mathcal{F}} F \Big| \geq 2w.$$
	\end{lemma}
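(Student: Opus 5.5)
The plan is to argue by induction on $|\mathcal{F}|$, adding the cliques of $\mathcal{F}$ one at a time and checking that the quantity $|\bigcap F| + |\bigcup F|$ never decreases.

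\emph{Base case.} If $|\mathcal{F}|=1$, both the intersection and the union equal the single clique. It has size $w$, so the sum is exactly $2w$.

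\emph{Inductive step.} Let $\mathcal{F}'=\mathcal{F}\cup\{F'\}$, where $\mathcal{F}$ is a non-empty family of maximum cliques and $F'$ is another clique of size $w$. Write $I=\bigcap_{F\in\mathcal{F}}F$ and $U=\bigcup_{F\in\mathcal{F}}F$, so $I\subseteq U$. The new intersection and union are $I\cap F'$ and $U\cup F'$. Since
$$|I\cap F'|=|I|-|I\setminus F'| \quad\text{and}\quad |U\cup F'|=|U|+|F'\setminus U|,$$
it suffices to show $|I\setminus F'|\le |F'\setminus U|$.

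\emph{Key observation.} Every vertex of $I$ is adjacent to every other vertex of $U$. Indeed, any $u\in U$ lies in some $F\in\mathcal{F}$, and $I\subseteq F$ with $F$ a clique. It follows that $S=(I\setminus F')\cup(F'\cap U)$ is a clique of $G$:
\begin{itemize}
\item $I\setminus F'$ is a clique, being a subset of a clique;
\item $F'\cap U$ is a clique, being a subset of $F'$;
\item every vertex of $I\setminus F'$ is adjacent to every vertex of $F'\cap U\subseteq U$.
\end{itemize}
The two parts of $S$ are disjoint, since the first avoids $F'$ and the second lies inside $F'$.

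\emph{Finishing.} Because $\kappa(G)=w$, we get $|I\setminus F'|+|F'\cap U|\le w=|F'|$. Subtracting $|F'\cap U|$ gives $|I\setminus F'|\le |F'\setminus U|$, which is exactly the required monotonicity. Starting from any single member of $\mathcal{F}$ and adding the remaining cliques one by one then yields $|\bigcap_{F\in\mathcal{F}}F|+|\bigcup_{F\in\mathcal{F}}F|\ge 2w$.

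The main obstacle is spotting the auxiliary clique $S$, that is, realising that the common core $I$ is completely joined to the whole union $U$. Once $S$ is in hand, the bound on the clique number turns directly into the needed inequality, and the rest is bookkeeping.
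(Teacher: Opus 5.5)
Your proof is correct and follows the paper's argument essentially verbatim. Both argue by induction on $|\mathcal{F}|$ and use the same auxiliary clique $(I\setminus F')\cup(F'\cap U)$ (the paper's $X\cup Y$), so that $\kappa(G)=w=|F'|$ yields $|I\setminus F'|\le|F'\setminus U|$ and the sum $|\bigcap F|+|\bigcup F|$ does not decrease.
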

	\begin{proof}
		We prove the lemma by induction on $|\mathcal{F}|$. If $|\mathcal{F}|=1$, let $\mathcal{F}=\{F_0\}$. Then $|F_0|=w$ and $F_0$ is the largest clique of $G$. Consequently,	

$$\Big| \bigcap_{F\in\mathcal{F}} F \Big| + \Big| \bigcup_{F\in\mathcal{F}} F \Big| = |F_0| + |F_0| = 2w.$$
		Now suppose that the lemma holds for $|\mathcal{F}|=r-1\geq1$. Set
			$$A = \bigcap_{F\in\mathcal{F}}F,~~ B = \bigcup_{F\in\mathcal{F}}F.$$
		By the induction hypothesis, $|A|+|B|\geq 2w$.
		Set $\mathcal{F}'=\mathcal{F}\cup \{F'\}$, where $F'$ is a largest clique with $F'\notin \mathcal{F}$. Then $|\mathcal{F}'|=r$ and
	
			$$\bigcap_{F\in\mathcal{F}'}F = \Big(\bigcap_{F\in\mathcal{F}}F\Big)\cap F' = A\cap F',~~
			\bigcup_{F\in\mathcal{F}'}F = \Big(\bigcup_{F\in\mathcal{F}}F\Big)\cup F' = B\cup F'.$$

		Set $X=A\setminus F'$ and $Y=B\cap F'$. Since $X\cap F'=\varnothing$ and $Y\subseteq F'$, $X\cap Y=\varnothing$. Furthermore, $X\cup Y$ is a clique of $G$ according to the following facts:
		\begin{itemize}
			\item[(1)] $X\subseteq A$ and $A\subseteq F$ for any $F\in \mathcal F$ imply that any two vertices of $X$ are adjacent;
			\item[(2)]  $Y\subseteq F'$ and $F'$ being a clique imply that any two vertices of $Y$ are adjacent;
			\item[(3)] for two vertices $x\in X$ and $y\in Y\subseteq B$, there exists a clique $F_1\in \mathcal F$ satisfying $y\in F_1$. Note that $x\in A\subseteq F$ for any $F\in \mathcal F$. Thus $\{x,y\}\subseteq F_1$.
		\end{itemize}
		It follows from $X\cap Y=\varnothing$, $\kappa(G)=w$ and $F'$ is a largest clique of $G$ that
		 $$|X|+|B\cap F'|= |X|+|Y| = |X\cup Y|  \leq w = |F'| = |F'\setminus B|+|B\cap F'|,$$
		and so $|F'\setminus B|\geq|X|=|A\setminus F'|$. It is readily to check that
			\begin{align*}
           \Big| \bigcap_{F\in\mathcal{F}'}F \Big| + \Big| \bigcup_{F\in\mathcal{F}'}F \Big|
			 =&|A\cap F'|+|B\cup F'|\\
			 =& (|A|-|A\setminus F'|)+(|B|+|F'\setminus B|)\\
			 =& |A|+|B|-|A\setminus F'|+|F'\setminus B|\\
		  \geq& |A|+|B| \geq 2w.
\end{align*}
This completes the proof. \qed
\end{proof}

A graph $G$ is called {\em $K_w$-free} if $G$ does not contain a subgraph isomorphic to $K_w$.

\begin{lemma}\label{lem:stability}
Let $w\geq4$, and $G$ be a simple graph with $\delta(G)\geq w-1$, where $\delta(G)$ denotes the minimum degree of $G$. If there exists at least a complete graph $K_w$ after removing any a vertex in $G$, then
$e(G)\geq {w+1\choose 2}$, and equality occurs if and only if $G$ is a complete graph $K_{w+1}$. Furthermore, if $G$ is $K_{w+1}$-free, then $e(G)\geq 2{w\choose 2}$.
\end{lemma}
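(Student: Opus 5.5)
The plan is to split according to the clique number $\kappa(G)$. First, $\kappa(G)\geq w$: since deleting any vertex leaves a copy of $K_w$, $G$ itself contains $K_w$.

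\emph{Case $\kappa(G)\geq w+1$.} Then $G$ contains a copy $H$ of $K_{w+1}$, so $e(G)\geq e(H)=\binom{w+1}{2}$. For the equality characterization, suppose $e(G)=\binom{w+1}{2}$. Then $E(G)=E(H)$. Any vertex $v\notin V(H)$ would have $d_G(v)\geq w-1\geq 3>0$, which would force an edge outside $E(H)$. Hence $V(G)=V(H)$ and $G=K_{w+1}$. Conversely, $K_{w+1}$ has minimum degree $w\geq w-1$, deleting any vertex leaves $K_w$, and it has exactly $\binom{w+1}{2}$ edges.

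\emph{Case $\kappa(G)=w$, i.e.\ $G$ is $K_{w+1}$-free.} Let $\mathcal F$ be the (non-empty) family of all $w$-cliques of $G$; these are exactly the largest cliques. No vertex $v$ lies in every member of $\mathcal F$, since otherwise $G-v$ would contain no $K_w$, contrary to the hypothesis. Hence $\bigcap_{F\in\mathcal F}F=\varnothing$. Lemma \ref{lem:key-lemma} then gives
\[
\Big|\bigcup_{F\in\mathcal F}F\Big|\geq 2w,
\]
so $v(G)\geq 2w$. Using $\delta(G)\geq w-1$ and the handshake lemma,
\[
e(G)=\tfrac12\sum_{v\in V(G)}d_G(v)\geq \tfrac12\cdot 2w(w-1)=2\binom{w}{2}.
\]
This is the ``furthermore'' statement. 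Moreover,
\[
2\binom{w}{2}-\binom{w+1}{2}=\tfrac{w(w-3)}{2}>0 \quad\text{for } w\geq 4,
\]
so in this case $e(G)>\binom{w+1}{2}$ strictly. Therefore equality in the first bound can only occur in the first case, which settles the "if and only if" claim.

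The substantive step is the second case: showing $v(G)\geq 2w$ when $G$ is $K_{w+1}$-free. It is handled entirely by applying Lemma \ref{lem:key-lemma} to the full family of maximum cliques, together with the observation that this family has empty common intersection. Everything else is edge counting, and the assumption $w\geq 4$ is used only to make the comparison $2\binom{w}{2}>\binom{w+1}{2}$ strict.
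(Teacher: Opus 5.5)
Your proof is correct. The ``furthermore'' part is exactly the paper's argument: the family of all maximum cliques has empty common intersection, so Lemma~\ref{lem:key-lemma} gives $v(G)\ge 2w$, and the handshake lemma finishes.

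Where you differ is the first claim and its equality case. The paper splits on the number of vertices. If $v(G)=w+1$, the deletion hypothesis forces $G=K_{w+1}$. If $v(G)\ge w+2$, it counts the edges of a fixed $K_w$ plus at least $2(w-1)-1$ edges at two further vertices, giving $e(G)\ge\binom{w}{2}+2(w-2)+1>\binom{w+1}{2}$.

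You instead split on whether $G$ contains a $K_{w+1}$. In that case the bound is immediate, and equality forces $G=K_{w+1}$ because any extra vertex has positive degree. In the $K_{w+1}$-free case you reuse the stronger bound $2\binom{w}{2}$, which exceeds $\binom{w+1}{2}$ exactly when $w\ge 4$.

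The paper's route keeps the basic bound and its equality characterization elementary and independent of Lemma~\ref{lem:key-lemma}. Yours organizes both claims around a single dichotomy, so the first claim falls out of the second at no extra cost. In both versions the hypothesis $w\ge 4$ enters at the analogous strict comparison: $2w-3>w$ for the paper, and $w(w-3)/2>0$ for you.
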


\begin{proof}Since there exists at least a complete graph $K_w$ after removing any a vertex in $G$, $v(G)\geq w+1$. For $v(G)=w+1$, $G$ is a complete graph $K_{w+1}$ since there exists at least a complete graph $K_w$ after removing any a vertex in $G$, and so $e(G)={w+1\choose 2}$. If $v(G)\geq w+2$, then $e(G)\geq{w\choose 2}+2(w-2)+1>{w+1\choose 2}$ by $\delta(G)\geq w-1$ and $w\geq4$.
To sum up, $e(G)= {w+1\choose 2}$  if and only if $v(G)=w+1$ and $G$ is a complete graph $K_{w+1}$.

Next, we prove that if $G$ is $K_{w+1}$-free, then $e(G)\geq 2{w\choose 2}$.
It is easy to know that there exists at least one $K_w$ in $G$. Since $G$ is $K_{w+1}$-free, there is no $K_{w+1}$ in $G$, and so $\kappa(G)=w$. Let $\mathcal F$ be the family of all the largest cliques of $G$. Note that
\begin{align*}
		\bigcap_{F\in \mathcal F}F=\varnothing,
\end{align*}
 otherwise there exists a vertex $x\in V(G)$ such that $x$ is contained in all the largest cliques of $G$, and so $G-x$ does not contain a $K_w$ as a subgraph, a contradiction. By Lemma \ref{lem:key-lemma} with $\kappa(G)=w$,
  \begin{align*}
    	\Big| \bigcap_{F\in\mathcal{F}} F \Big|+\Big| \bigcup_{F\in\mathcal{F}} F \Big| = 0+\Big| \bigcup_{F\in\mathcal{F}} F \Big| \geq 2w.
   \end{align*}
   Therefore, $|V(G)|\geq \Big| \bigcup_{F\in\mathcal{F}} F \Big| \geq 2w$. Now by $\delta(G)\geq w-1$ and the handshaking lemma,
  \begin{align*}
    	2e(G) = \sum_{v\in V(G)} d(v) \geq |V(G)|\delta(G) = |V(G)|(w-1) \geq 2w(w-1).
   \end{align*}
Thus $e(G)\geq w(w-1)=2\binom{w}{2}$.
\qed
\end{proof}


\begin{theorem}\label{thm:ub_C'(n,1,w,w-2)}
	Let $n,w$ be integers with $w\geq4$. Then $$C'(n,1,w,w-2)\leq\left\lfloor\frac{n}{w+1}\left\lfloor\frac{n-1}{w}\right\rfloor\right\rfloor+\left\lfloor\frac{n(n-1)-\left\lfloor\frac{n}{w+1}\left\lfloor\frac{n-1}{w}\right\rfloor\right\rfloor(w+1)w}{2w(w-1)}\right\rfloor.$$
\end{theorem}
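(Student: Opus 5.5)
We take an optimal $(n,1,w,w-2)$-CPECC code $\mathcal B=\mathcal P_1\cup\dots\cup\mathcal P_M$ and the graphs $G_1,\dots,G_M$ defined above. We already know that the edge sets $E_1,\dots,E_M$ are pairwise disjoint subsets of $E(K_n)$. The plan is to split the indices into two classes: those $i$ for which $G_i$ contains a $K_{w+1}$, and those for which $G_i$ is $K_{w+1}$-free. We then bound the two classes separately, one by a packing argument and one by edge counting.

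\textbf{Step 1: each $G_i$ satisfies the hypotheses of Lemma \ref{lem:stability}.} Every vertex of $G_i$ lies in some block $B\in\mathcal P_i$, and $B$ is a $K_w$ in $G_i$, so $\delta(G_i)\ge w-1$. Property $\textbf{A}(1)$ says that for every vertex $x$ there is a block of $\mathcal P_i$ avoiding $x$, so $G_i-x$ still contains a $K_w$. Lemma \ref{lem:stability} therefore gives $e(G_i)\ge\binom{w+1}{2}$ in general, and $e(G_i)\ge 2\binom{w}{2}=w(w-1)$ whenever $G_i$ is $K_{w+1}$-free.

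\textbf{Step 2: the two classes.} Let $I$ be the set of indices $i$ such that $G_i$ contains a copy of $K_{w+1}$, and put $a=|I|$. Choose one such copy in each $G_i$ with $i\in I$. Since the $E_i$ are pairwise disjoint, these copies are edge-disjoint $K_{w+1}$'s in $K_n$, i.e.\ an $(n,w+1,1)$-packing. Lemma \ref{lem:J_bound} then gives $a\le J:=\big\lfloor\frac{n}{w+1}\lfloor\frac{n-1}{w}\rfloor\big\rfloor$. The remaining $M-a$ graphs are $K_{w+1}$-free, so each has at least $w(w-1)$ edges by Step 1. Edge-disjointness yields
\[ a\binom{w+1}{2}+(M-a)w(w-1)\le \binom{n}{2}, \]
that is, $M-a\le \big(n(n-1)-a(w+1)w\big)/\big(2w(w-1)\big)$.

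\textbf{Step 3: optimizing over $a$.} From Step 2 we get $M\le a+\big\lfloor\big(n(n-1)-a(w+1)w\big)/\big(2w(w-1)\big)\big\rfloor$. Increasing $a$ by one changes the right-hand side by $1-\frac{w+1}{2(w-1)}$ before flooring. This is nonnegative exactly when $2(w-1)\ge w+1$, i.e.\ $w\ge3$, so it holds in our range $w\ge4$. We must also check monotonicity with the floor present. Writing $f(a)=a+\lfloor (N-ac)/D\rfloor$ with $c=w(w+1)$, $D=2w(w-1)$ and $c\le D$, we have $f(a+1)=a+1+\lfloor (N-ac)/D-c/D\rfloor\ge a+\lfloor (N-ac)/D\rfloor=f(a)$, because subtracting $c/D\le1$ lowers the floor by at most $1$. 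Hence $f$ is nondecreasing in $a$, and $a\le J$ gives $M\le f(J)$, which is exactly the claimed bound. This floor-monotonicity check is the only delicate point; the structural work has already been done in Lemmas \ref{lem:key-lemma} and \ref{lem:stability}.
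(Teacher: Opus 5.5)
Your proof is correct and follows the paper's argument: Lemma \ref{lem:stability} supplies the per-class edge bounds, the Johnson bound (Lemma \ref{lem:J_bound}) handles the classes carrying a $K_{w+1}$, and the $E_i$ are edge-disjoint. Positivity of $1-\binom{w+1}{2}/(2\binom{w}{2})$ for $w\ge4$ then lets you replace $a$ by $J$. The paper does this last step over the reals and floors at the end using integrality of $M-J$, which is equivalent to your floor-monotonicity check. Your split by whether $G_i$ contains a $K_{w+1}$ is actually cleaner than the paper's split by whether $e(G_i)=\binom{w+1}{2}$. The paper's ``without loss of generality'' tacitly omits classes whose graph contains a $K_{w+1}$ plus extra edges; your version covers them.
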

\begin{proof} Let $\mathcal B=\mathcal P_1\cup \mathcal P_2\cup\dots\cup\mathcal P_M$ be an $(n,1,w,w-2)$-CPECC code. Let $G_i=(V_i,E_i)$ be a graph for every $i\in[M]$, where $V_i=\bigcup_{B\in\mathcal P_i}B$ and $E_i=\bigcup_{B\in\mathcal P_i}\{\{u,v\}:u,v\in B,u\neq v\}$. By Property $\textbf{A}(t)$ and Lemma \ref{lem:stability}, each vertex of $V_i$ has degree at least $w-1$ and $e(G_i)\geq {w+1\choose 2}$.
Without loss of generality, suppose that $e(G_i)={w+1\choose 2}$ for all $i\in[b]$, $e(G_i)>{w+1\choose 2}$ and $G_i$ is $K_{w+1}$-free for all $b+1\leq i\leq M$.

By Lemma \ref{lem:stability} and Property $\textbf{C}(e)$, $([n],\{V_i:i\in[b]\})$ is an $(n,w+1,1)$-packing and all $E_i$ are pairwise disjoint.
It follows from Lemma \ref{lem:J_bound} that $$b\leq \big\lfloor\frac{n}{w+1}\big\lfloor\frac{n-1}{w}\big\rfloor\big\rfloor.$$
By Lemma \ref{lem:stability}, $e(G_i)\geq2{w\choose 2}$ for all $b+1\leq i\leq M$, and so
$$b\binom{w+1}{2}+2(M-b)\binom{w}{2}\leq\sum_{i=1}^Me(G_i)\leq\binom{n}{2}.$$
Thus $M=b+(M-b)\leq b+\frac{{n \choose 2}-{w+1\choose 2}b}{2{w\choose 2}}=b(1-\frac{{w+1\choose 2}}{2{w\choose 2}})+\frac{{n \choose 2}}{2{w\choose 2}}$. Since $1-\frac{{w+1\choose 2}}{2{w\choose 2}}>0$ by $w\geq4$, we have that
$$M\leq b\left(1-\frac{{w+1\choose 2}}{2{w\choose 2}}\right)+\frac{{n \choose 2}}{2{w\choose 2}}\leq  \big\lfloor\frac{n}{w+1}\big\lfloor\frac{n-1}{w}\big\rfloor\big\rfloor\left(1-\frac{{w+1\choose 2}}{2{w\choose 2}}\right)+\frac{{n \choose 2}}{2{w\choose 2}}.$$
Since $M$ is an integer, $M\leq\left\lfloor\frac{n}{w+1}\left\lfloor\frac{n-1}{w}\right\rfloor\right\rfloor+\left\lfloor\frac{n(n-1)-\left\lfloor\frac{n}{w+1}
\left\lfloor\frac{n-1}{w}\right\rfloor\right\rfloor(w+1)w}{2w(w-1)}\right\rfloor.$
\qed
\end{proof}

The following theorem presents a construction of $(n,1,w,w-2)$-CPECC codes attaining the upper bound in Theorem \ref{thm:ub_C'(n,1,w,w-2)}.
\begin{theorem}\label{thm:bibd_C'(n,1,w,w-2)}
For $w\geq4$, if there exists an $(n+1,w+1,1)$-BIBD, then $$C'(n,1,w,w-2)=\left\lfloor\frac{n}{w+1}\left\lfloor\frac{n-1}{w}\right\rfloor\right\rfloor+\left\lfloor\frac{n(n-1)-
\left\lfloor\frac{n}{w+1}\left\lfloor\frac{n-1}{w}\right\rfloor\right\rfloor(w+1)w}{2w(w-1)}\right\rfloor.$$
\end{theorem}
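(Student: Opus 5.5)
The plan is to prove the inequality ``$\geq$'' by an explicit construction. The matching ``$\leq$'' is exactly Theorem~\ref{thm:ub_C'(n,1,w,w-2)}. The idea is to delete one point $\infty$ from the $(n+1,w+1,1)$-BIBD $(V,\mathcal{A})$ and use the derived structure on $[n]=V\setminus\{\infty\}$. It splits into two kinds of blocks:
\begin{itemize}
\item blocks of $\mathcal{A}$ not containing $\infty$, which are $(w+1)$-subsets of $[n]$ pairwise meeting in at most one point;
\item blocks containing $\infty$, which after deleting $\infty$ become $w$-sets $S_1,\dots,S_{n/w}$ partitioning $[n]$.
\end{itemize}

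\textbf{Step 1 (arithmetic).} The divisibility conditions for the BIBD give $n\equiv 0\pmod w$ and $n(n+1)\equiv 0\pmod{w(w+1)}$. The number of blocks avoiding $\infty$ is
\[
\frac{n(n+1)}{w(w+1)}-\frac{n}{w}=\frac{n(n-w)}{w(w+1)}.
\]
Since $w\mid n$ we have $\lfloor (n-1)/w\rfloor=(n-w)/w$, so this count equals $\lfloor\frac{n}{w+1}\lfloor\frac{n-1}{w}\rfloor\rfloor$ exactly. Call it $b$. Then
\[
n(n-1)-b(w+1)w=n(n-1)-n(n-w)=n(w-1),
\]
so the second term of the bound is $\lfloor n/(2w)\rfloor$. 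The target size is therefore $b+\lfloor n/(2w)\rfloor$.

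\textbf{Step 2 (the code).} For each block $A\in\mathcal{A}$ with $\infty\notin A$, let $\mathcal{P}_A=\binom{A}{w}$. Next, pair up the sets $S_1,\dots,S_{n/w}$ into $\lfloor n/(2w)\rfloor$ disjoint pairs $\{S_{2j-1},S_{2j}\}$. For each pair, let $\mathcal{P}'_j=\{S_{2j-1},S_{2j}\}$; if $n/w$ is odd, discard the leftover set. Every codeword then has weight exactly $w$, so Property $\textbf{B}'(w)$ holds.

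\textbf{Step 3 (verification).}
\begin{itemize}
\item Property $\textbf{A}(1)$: for $\mathcal{P}_A$, any point $x$ is missed by some $w$-subset of $A$, because $|A|=w+1$. For $\mathcal{P}'_j$, the sets $S_{2j-1}$ and $S_{2j}$ are disjoint, so $x$ lies in at most one of them.
\item Property $\textbf{C}(w-2)$: this is equivalent to requiring that codewords from different parts share at most one point. Any two codewords from different parts lie inside distinct blocks of $\mathcal{A}$, after deleting $\infty$. Two distinct blocks of a $\lambda=1$ design meet in at most one point.
\end{itemize}
Hence $C'(n,1,w,w-2)\geq b+\lfloor n/(2w)\rfloor$. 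Combined with Theorem~\ref{thm:ub_C'(n,1,w,w-2)}, this gives equality.

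The only real obstacle is the arithmetic of Step 1, namely confirming that both floor expressions in the bound evaluate exactly to $b$ and $\lfloor n/(2w)\rfloor$. This rests on $n\equiv 0\pmod w$, which comes from $n+1\equiv 1\pmod w$ (each point lies in $n/w$ blocks).
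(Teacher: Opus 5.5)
Your proposal is correct and follows the paper's own proof. Both delete a point $\infty$ of the $(n+1,w+1,1)$-BIBD, take $\binom{A}{w}$ as one part for each block $A$ avoiding $\infty$, pair the $n/w$ derived $w$-sets into $\lfloor n/(2w)\rfloor$ two-element parts, and rely on Theorem~\ref{thm:ub_C'(n,1,w,w-2)} for the matching upper bound. Your Step 1 arithmetic and your checks of Properties $\textbf{A}(1)$ and $\textbf{C}(w-2)$ spell out what the paper only calls ``readily checked.''
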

\begin{proof}
Suppose that $(X \cup \{\infty\}, \mathcal{F})$ is an $(n+1, w+1, 1)$-BIBD.
Let $\mathcal{F}_\infty = \{ F \in \mathcal{F} : \infty \in F \}$, $\mathcal{F}_1 = \{F \setminus \{\infty\} : F \in \mathcal{F}_\infty \}$ and $\mathcal{F}_2 = \mathcal{F} \setminus \mathcal{F}_\infty$. Note that $|\mathcal F|=\frac{(n+1)n}{(w+1)w}$, $|\mathcal{F}_\infty|=\frac{n}{w}$, and
\begin{align}\label{equa:num_block}
|\mathcal{F}_2|=|\mathcal F|-|\mathcal{F}_\infty|=\frac{(n-w)n}{(w+1)w}=\frac{n}{w+1}\left\lfloor\frac{n-1}{w}\right\rfloor
=\left\lfloor\frac{n}{w+1}\left\lfloor\frac{n-1}{w}\right\rfloor\right\rfloor.
\end{align}
By (\ref{equa:num_block}),
$$\left\lfloor\frac{n(n-1)-
\left\lfloor\frac{n}{w+1}\left\lfloor\frac{n-1}{w}\right\rfloor\right\rfloor(w+1)w}{2w(w-1)}\right\rfloor=\left\lfloor\frac{n}{2w}\right\rfloor.$$
Let $\mathcal F_1=\{F_1,\dots,F_{\frac{n}{w}}\}$. Note that $F_1,\dots,F_{\frac{n}{w}}$ are pairwise disjoint.
Let $$\mathcal P_F=\binom{F}{w}, \text{ for each } F\in\mathcal F_2,$$ and
$$\mathcal P_i=\{F_i,F_{i+\left\lfloor\frac{n}{2w}\right\rfloor}\}, \text{ for each }1\leq i\leq\left\lfloor\frac{n}{2w}\right\rfloor.$$

It is readily checked that $\mathcal B=(\bigcup_{F\in\mathcal F_2}\mathcal P_F)\cup(\bigcup_{1\leq i\leq\left\lfloor\frac{n}{2w}\right\rfloor}\mathcal P_i)$ is an $(n,1,w,w-2)$-CPECC code of size $\left\lfloor\frac{n}{w+1}\left\lfloor\frac{n-1}{w}\right\rfloor\right\rfloor+\left\lfloor\frac{n(n-1)-
\left\lfloor\frac{n}{w+1}\left\lfloor\frac{n-1}{w}\right\rfloor\right\rfloor(w+1)w}{2w(w-1)}\right\rfloor$.
\qed
\end{proof}

Combining Theorems \ref{thm:(n,5,1)-BIBD}, \ref{thm:asy_BIBD} and \ref{thm:bibd_C'(n,1,w,w-2)}, we obtain some new families of optimal $(n,1,w,w-2)$-CPECC codes.
\begin{corollary}\label{cor:C'(n,1,w,w-2)}
 Let $n$ and $w$ be positive integers.
\begin{enumerate}
\item[$(1)$] If $n\equiv 0,4\pmod {20}$, then $C'(n,1,4,2)=\left\lfloor\frac{n}{5}\left\lfloor\frac{n-1}{4}\right\rfloor\right\rfloor+\lfloor\frac{n}{8}\rfloor$.
\item[$(2)$] If $n\geq e^{{(w+1)}^{{(w+1)}^{2}}}-1$, $w\geq4$, $n(n+1)\equiv 0\pmod {w(w+1)}$ and $n\equiv 0\pmod {w}$, then $C'(n,1,w,w-2)=\left\lfloor\frac{n}{w+1}\left\lfloor\frac{n-1}{w}\right\rfloor\right\rfloor+\left\lfloor\frac{n}{2w}\right\rfloor $.
\end{enumerate}
\end{corollary}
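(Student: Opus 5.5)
Both parts come from Theorem \ref{thm:bibd_C'(n,1,w,w-2)}. That theorem says the upper bound of Theorem \ref{thm:ub_C'(n,1,w,w-2)} is attained whenever an $(n+1,w+1,1)$-BIBD exists. So in each case I only need two things: produce such a BIBD from Theorem \ref{thm:(n,5,1)-BIBD} or Theorem \ref{thm:asy_BIBD}, and reduce the second floor term to $\lfloor n/(2w)\rfloor$. The reduction is already done inside the proof of Theorem \ref{thm:bibd_C'(n,1,w,w-2)}: the identity (\ref{equa:num_block}) gives $\lfloor\frac{n}{w+1}\lfloor\frac{n-1}{w}\rfloor\rfloor=\frac{n(n-w)}{w(w+1)}$, and substituting this into the numerator gives $n(n-1)-n(n-w)=n(w-1)$. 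Hence the second term equals $\lfloor\frac{n(w-1)}{2w(w-1)}\rfloor=\lfloor\frac{n}{2w}\rfloor$.

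\textbf{Part (1).} Take $w=4$, so we need an $(n+1,5,1)$-BIBD. By Theorem \ref{thm:(n,5,1)-BIBD}, this exists if and only if $n+1\equiv 1,5\pmod{20}$, that is, $n\equiv 0,4\pmod{20}$, which is exactly the hypothesis. With $w=4$ the second term is $\lfloor n/8\rfloor$. One degenerate case needs a remark: $n=0$ makes both sides $0$, and for positive $n$ we have $n\ge 4$. If $n=4$, the $(5,5,1)$-BIBD consists of a single block; it is still a valid BIBD and the formula gives $0+\lfloor 4/8\rfloor=0$. Since the paper's convention requires $n\ge t+w=5$, the case $n=4$ is vacuous or trivially consistent.

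\textbf{Part (2).} Apply Theorem \ref{thm:asy_BIBD} with $k=w+1$ and $v=n+1$. Its hypotheses become:
\begin{itemize}
\item[(a)] $n+1\ge e^{(w+1)^{(w+1)^2}}$, which is the assumed size condition;
\item[(b)] $(n+1)n\equiv 0\pmod{(w+1)w}$, which is assumed;
\item[(c)] $n+1\equiv 1\pmod{w}$, which is equivalent to $n\equiv 0\pmod w$.
\end{itemize}
So an $(n+1,w+1,1)$-BIBD exists, and Theorem \ref{thm:bibd_C'(n,1,w,w-2)} together with the simplification above gives the stated value $h_2$.

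The only point needing care is that the identity (\ref{equa:num_block}) is valid in both settings. It requires $w\mid n$, so that $\lfloor\frac{n-1}{w}\rfloor=\frac{n-w}{w}$, and it requires $\frac{n(n-w)}{w(w+1)}$ to be an integer. Both follow from the BIBD divisibility conditions, and in part (1) $4\mid n$ holds directly. The rest is bookkeeping, so I expect no real obstacle.
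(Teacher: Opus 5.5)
Your proposal is correct and follows the paper's own argument. The paper obtains the $(n+1,w+1,1)$-BIBD from Theorem \ref{thm:(n,5,1)-BIBD} (case $w=4$) or Theorem \ref{thm:asy_BIBD} (with $k=w+1$ and $v=n+1$), applies Theorem \ref{thm:bibd_C'(n,1,w,w-2)}, and uses the identity (\ref{equa:num_block}) to reduce the second floor term to $\lfloor n/(2w)\rfloor$, exactly as you do; your extra checks on the degenerate case $n=4$ and on why (\ref{equa:num_block}) applies are harmless.
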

\subsection{The upper bound of $(n,t,w,w-2)$-CPECC codes}
This subsection confirms the upper bound part of the following conjecture in \cite{lyj} by the $K_w$-transversal number and induced degenerate subgraphs of graph.

\begin{conjecture}[\cite{lyj}]\label{conj-1}
Let $n,t,w$ be positive integers. Then
 $$C'(n,t,w,w-2)\leq \frac{n(n-1)}{2(2w-3)(t+1)}.$$
Furthermore,
when $n>e^{e^{(2w-3)^{12(2w-3)^2}}}$, $\left \{
\begin {aligned}
&n\equiv 0\pmod {l(2w-3)}\\
&n\equiv 1\pmod {(2w-4)}\\
\end {aligned}
\right.$ and $t=l(w-2)-1$, it holds that
 $$C'(n,t,w,w-2)= \frac{n(n-1)}{2(2w-3)(t+1)}.$$
\end{conjecture}

Similar to subsection \ref{subsec:(n,1,w,w-2)-CPECC}, to give an upper bound of $(n,t,w,w-2)$-CPECC codes, it suffices to prove Lemma \ref{lem:CPECC_e(G_i)}. We begin with several definitions and preliminary results to prove Lemma \ref{lem:CPECC_e(G_i)}.
Let $G=(V,E)$ be a graph and $w\geq 1$. A subset $T\subseteq V$ is called a {\em $K_w$-transversal} of $G$ if $T$ intersects the vertex set of every copy of $K_w$ in $G$; that is, for every subgraph $Q\subseteq G$ with $Q\cong K_w$, $V(Q)\cap T\neq \varnothing$. The minimum cardinality among all $K_w$-transversals of $G$, denoted by $\tau_w(G)$, is called the {\em $K_w$-transversal number} of $G$. Equivalently,
$$\tau_w(G)=\min\bigl\{|T|: T\subseteq V,\; V(Q)\cap T\neq\varnothing \text{ for every } Q\subseteq G \text{ with } Q\cong K_w\bigr\}.$$
A graph $H$ is {\em $f$-degenerate} if every non-empty subgraph of $H$ contains a vertex of degree smaller than $f$.
Let $\alpha_{f}(G)$ denote the maximum number of vertices of an induced $f$-degenerate subgraph of $G$.

\begin{theorem}[\cite{Alon}]\label{thm:alon_induced}
	If $G=(V,E)$ is a graph, then
	\begin{align*}
		\alpha_{f}(G) \geq \sum_{v\in V} \min\Big\{1,\frac{f}{d_G(v)+1}\Big\}.
	\end{align*}
\end{theorem}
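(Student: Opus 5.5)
We prove the bound with the standard probabilistic argument via a random ordering of the vertices, which is essentially Alon's proof. We may assume $f\ge 1$ and that $G$ is finite. Choose a uniformly random linear order $\sigma$ of $V$. Let $S$ be the set of vertices $v$ that occupy one of the first $f$ positions among the $d_G(v)+1$ vertices of the closed neighbourhood $N[v]=\{v\}\cup N(v)$. In other words, $v\in S$ if fewer than $f$ neighbours of $v$ precede $v$ in $\sigma$.

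First we show that $G[S]$ is $f$-degenerate for every ordering $\sigma$. Let $H$ be a non-empty subgraph of $G[S]$, and let $u$ be the vertex of $H$ that comes last in $\sigma$. Every neighbour of $u$ in $H$ is a neighbour of $u$ in $G$ preceding $u$ in $\sigma$. Since $u\in S$, there are fewer than $f$ such neighbours. Hence $d_H(u)<f$, as required. Consequently $\alpha_f(G)\ge |S|$ for every $\sigma$.

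Next we compute the expectation of $|S|$. For a fixed $v$, the restriction of $\sigma$ to $N[v]$ is a uniformly random order of these $d_G(v)+1$ vertices. The position of $v$ within it is therefore uniform over $\{1,\dots,d_G(v)+1\}$. Thus
\[\Pr[v\in S]=\frac{\min\{f,d_G(v)+1\}}{d_G(v)+1}=\min\Big\{1,\frac{f}{d_G(v)+1}\Big\}.\]
By linearity of expectation, $\mathbb E|S|=\sum_{v\in V}\min\{1,\frac{f}{d_G(v)+1}\}$. Some ordering $\sigma$ attains $|S|\ge \mathbb E|S|$, and for that ordering $G[S]$ is an induced $f$-degenerate subgraph. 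This gives the claimed lower bound on $\alpha_f(G)$.

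The only delicate step is the degeneracy claim. It needs the last vertex of an arbitrary subgraph $H$, not just of $G[S]$, and it uses the fact that neighbours in $H$ are a subset of the neighbours in $G$. Both points are handled above, so no real obstacle remains.
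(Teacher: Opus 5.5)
Your proof is correct. The paper quotes this bound from the literature without proof, and your random-ordering argument is the standard one: keep the vertices with fewer than $f$ earlier neighbours, certify degeneracy via the last vertex of an arbitrary subgraph, and use that $v$'s position within $N[v]$ is uniform. It is also the same technique the paper uses in Theorem~\ref{thm:clique-transversal-nonuniform}, where the complementary set of vertices with at least $w-1$ earlier neighbours (your $V\setminus S$ with $f=w-1$) serves as the clique transversal.
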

	
\begin{lemma}\label{lem:K_w-free}
	Let $H$ be a graph and $w\geq 3$. If $H$ is $(w-1)$-degenerate, then $H$ is $K_w$-free.
\end{lemma}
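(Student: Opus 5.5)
We argue by contraposition: assuming $H$ contains a copy of $K_w$, we exhibit a non-empty subgraph of $H$ in which every vertex has degree at least $w-1$. This contradicts the definition of $(w-1)$-degeneracy, which requires every non-empty subgraph to contain a vertex of degree smaller than $w-1$.

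Concretely, suppose $Q\subseteq H$ with $Q\cong K_w$. We take the subgraph $Q$ itself (equivalently, the subgraph of $H$ induced by $V(Q)$; either choice works, since degrees can only grow when passing to the induced subgraph). The subgraph $Q$ is non-empty because $w\geq 3>0$. Every vertex of $K_w$ is adjacent to the other $w-1$ vertices, so $d_Q(v)=w-1$ for all $v\in V(Q)$. Hence $Q$ has no vertex of degree smaller than $w-1$, which contradicts the assumption that $H$ is $(w-1)$-degenerate. Therefore $H$ is $K_w$-free.

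There is no real obstacle here. The only point needing care is to match the paper's convention: ``$f$-degenerate'' means every non-empty subgraph has a vertex of degree $<f$. With $f=w-1$, the clique $K_w$ has minimum degree exactly $w-1\not<w-1$, which is exactly what the argument uses. The hypothesis $w\geq 3$ is not essential to the argument; it simply matches the range in which the lemma will be applied later, together with Theorem \ref{thm:alon_induced}, to bound the $K_w$-transversal number.
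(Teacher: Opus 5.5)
Your proof is correct and matches the paper's argument: both take a copy $Q\cong K_w$ in $H$ and note that every vertex of $Q$ has degree exactly $w-1$, so the non-empty subgraph $Q$ has no vertex of degree less than $w-1$, contradicting $(w-1)$-degeneracy. Your side remarks, that the induced subgraph on $V(Q)$ would also work and that $w\geq 3$ is not needed, are accurate but not required.
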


\begin{proof}
Suppose for contradiction that $H$ contains a copy of $K_w$. Let $Q\subseteq H$ be a non-empty subgraph with $Q\cong K_w$.
Note that $d_Q(x)=w-1$ for every $x\in V(Q)$.
Since $H$ is $(w-1)$-degenerate, every non-empty subgraph of $H$ must contain a vertex of degree strictly less than $w-1$, and so there must exist a vertex $x\in V(Q)$ satisfying that $d_Q(x)<w-1$. This is a contradiction.\qed
\end{proof}

\begin{lemma}\label{lem:CPECC_e(G_i)}
Let $G$ be a graph and $w\geq 3$ an integer. If there exists at least a complete graph $K_w$ after removing any $t$ vertices in $G$, then
$$e(G)\geq (2w-3)(t+1).$$
\end{lemma}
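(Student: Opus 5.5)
The plan is to turn the hypothesis into a lower bound on the $K_w$-transversal number, and then to bound that transversal number from above in terms of $e(G)$ using Theorem~\ref{thm:alon_induced} with $f=w-1$, Lemma~\ref{lem:K_w-free} and Lemma~\ref{lem:inequality_1}.

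\emph{Step 1: $\tau_w(G)\geq t+1$.} Suppose $T\subseteq V(G)$ is a $K_w$-transversal with $|T|\leq t$. If $|V(G)|\le t$ there is nothing to prove, since deleting all vertices leaves no $K_w$, contradicting the hypothesis. Otherwise, enlarge $T$ arbitrarily to a $t$-set. Removing these $t$ vertices kills every copy of $K_w$, again contradicting the hypothesis. Hence every $K_w$-transversal has at least $t+1$ vertices.

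\emph{Step 2: an upper bound on $\tau_w(G)$.} By Theorem~\ref{thm:alon_induced} with $f=w-1$, there is a set $S\subseteq V$ such that $G[S]$ is $(w-1)$-degenerate and
\begin{align*}
|S|\geq \sum_{v\in V}\min\Big\{1,\frac{w-1}{d_G(v)+1}\Big\}.
\end{align*}
By Lemma~\ref{lem:K_w-free} (valid since $w\geq 3$), $G[S]$ is $K_w$-free. Therefore every copy of $K_w$ in $G$ meets $V\setminus S$, so $V\setminus S$ is a $K_w$-transversal. Combining with Step 1,
\begin{align*}
t+1\leq |V\setminus S| \leq \sum_{v\in V}\Big(1-\min\Big\{1,\frac{w-1}{d_G(v)+1}\Big\}\Big)=\sum_{v:\,d_G(v)\geq w-1}\frac{d_G(v)-w+2}{d_G(v)+1}.
\end{align*}

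\emph{Step 3: convert to edges.} For each vertex with $d_G(v)\ge w-1$, apply Lemma~\ref{lem:inequality_1} with $\ell=d_G(v)$ to get
\begin{align*}
\frac{d_G(v)-w+2}{d_G(v)+1}\leq \frac{d_G(v)}{2(2w-3)}.
\end{align*}
For vertices with $d_G(v)<w-1$ the corresponding summand is $0$, while $\frac{d_G(v)}{2(2w-3)}\ge 0$. So we may sum the bound $\frac{d_G(v)}{2(2w-3)}$ over all vertices. By the handshaking lemma this gives
\begin{align*}
t+1\leq \frac{\sum_v d_G(v)}{2(2w-3)}=\frac{e(G)}{2w-3},
\end{align*}
which is the claim.

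I expect the only real subtlety to be the bookkeeping in Step 2, namely checking that the complement of the degenerate set is genuinely a transversal. This is exactly what Lemma~\ref{lem:K_w-free} provides, and Lemma~\ref{lem:inequality_1} is tailored precisely to make Step 3 work.
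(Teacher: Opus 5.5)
Your proposal is correct and follows the paper's proof. It applies Alon's bound with $f=w-1$ to get an induced $(w-1)$-degenerate, hence $K_w$-free, subgraph, whose complement is a $K_w$-transversal of size at most $e(G)/(2w-3)$ by Lemma~\ref{lem:inequality_1}, and it uses the hypothesis to force $\tau_w(G)\geq t+1$. Your Step 1 is slightly more careful than the paper, which asserts $\tau_w(G)\geq t+1$ without padding a small transversal to a $t$-set or treating the case $|V(G)|\le t$.
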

\begin{proof}
By the definition of $\alpha_{f}(G)$ and Theorem \ref{thm:alon_induced} with $f=w-1$, there exists an induced $(w-1)$-degenerate subgraph $H\subseteq G$ such that
\begin{align*}
    v(H) \geq \sum_{v\in V(G)} \min\Big\{1,\frac{w-1}{d_G(v)+1}\Big\}.
\end{align*}
Then
\begin{align*}
    |V(G)\setminus V(H)| \leq \sum_{v\in V(G)} \Big(1-\min\Big\{1,\frac{w-1}{d_G(v)+1}\Big\}\Big).
\end{align*}
For a vertex $v\in V(G)$, if $d_G(v) \leq w-2$, then
\begin{align*}
    1-\min\Big\{1,\frac{w-1}{d_G(v)+1}\Big\}=1-1=0;
\end{align*}
and if $d_G(v)\geq w-1$, then
\begin{align*}
    1-\min\Big\{1,\frac{w-1}{d_G(v)+1}\Big\}=1-\frac{w-1}{d_G(v)+1}=\frac{d_G(v)-w+2}{d_G(v)+1} \leq \frac{d_G(v)}{2(2w-3)}
\end{align*}
by Lemma \ref{lem:inequality_1} with $\ell=d_G(v)$.
So 
\begin{align*}
|V(G)\setminus V(H)| \leq \sum_{v\in V(G)} \Big(1-\min\Big\{1,\frac{w-1}{d_G(v)+1}\Big\}\Big)
    		         \leq \sum_{v\in V(G)} \frac{d_G(v)}{2(2w-3)}=\frac{e(G)}{2w-3}.
\end{align*}

Since $H$ is $(w-1)$-degenerate, $H$ is also $K_w$-free by Lemma \ref{lem:K_w-free}. Thus $T=V(G)\setminus V(H)$ is a $K_w$-transversal of $G$. (Otherwise, there exists a contradiction that $H$ contains at least a $K_w$). Therefore,
\begin{align*}
    \tau_w(G) \leq |T| = |V(G)\setminus V(H)| \leq \frac{e(G)}{2w-3}.
\end{align*}
Since there exists at least a complete graph $K_w$ after removing any $t$ vertices in $G$, $\tau_w(G)\geq t+1$. Thus $e(G)\geq (2w-3)(t+1)$.\qed
\end{proof}

We are ready to prove the upper bound part of the Conjecture \ref{conj-1} by employing Lemma \ref{lem:CPECC_e(G_i)}.
\begin{theorem}{\label{thm:ub_(n,t,w,w-2)-CPECC}}
For integers $n,t$ and $w$ with $w\geq3$, $C'(n,t,w,w-2)\leq \frac{n(n-1)}{2(2w-3)(t+1)}$.
\end{theorem}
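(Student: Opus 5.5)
The plan follows the same template as Theorem \ref{thm:ub_C'(n,1,w,w-2)}. Let $\mathcal B=\mathcal P_1\cup\dots\cup\mathcal P_M$ be the set-theoretic representation of an $(n,t,w,w-2)$-CPECC code of maximum size $M=C'(n,t,w,w-2)$. For each $i\in[M]$ I define the graph $G_i=(V_i,E_i)$ exactly as before: $V_i=\bigcup_{B\in\mathcal P_i}B$, and $E_i$ consists of all pairs contained in some block of $\mathcal P_i$. The argument has two parts: show that the edge sets $E_i$ are pairwise disjoint, and give a uniform lower bound on each $e(G_i)$. Double counting inside $\binom{[n]}{2}$ then finishes it.

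\emph{Disjointness.} Take $B_i\in\mathcal P_i$ and $B_j\in\mathcal P_j$ with $i\neq j$. Since every block has size exactly $w$ by Property $\textbf{B}'(w)$, Property $\textbf{C}(w-2)$ gives
\begin{align*}
|B_i\cap B_j|=\frac{2w-|B_i\Delta B_j|}{2}\leq \frac{2w-(2w-3)}{2}=\frac{3}{2}.
\end{align*}
So two blocks from different parts share at most one point, and hence no pair $\{u,v\}$ lies in both $E_i$ and $E_j$. Therefore
\begin{align*}
\sum_{i=1}^M e(G_i)\leq \binom{n}{2}.
\end{align*}

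\emph{Lower bound on $e(G_i)$.} The aim is to apply Lemma \ref{lem:CPECC_e(G_i)} to each $G_i$. Fix any set $T$ of $t$ vertices. Extend $T$ if necessary to a $t$-subset $T'$ of $[n]$; this is possible because $n\geq t+w$. Property $\textbf{A}(t)$ gives a block $B\in\mathcal P_i$ with $B\cap T'=\varnothing$. Since $|B|=w$ and all pairs of $B$ lie in $E_i$, the set $B$ spans a copy of $K_w$ in $G_i-T$. Thus the hypothesis of Lemma \ref{lem:CPECC_e(G_i)} holds, and we obtain $e(G_i)\geq (2w-3)(t+1)$ for every $i$.

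\emph{Conclusion.} Combining the two parts,
\begin{align*}
M(2w-3)(t+1)\leq\binom{n}{2},
\end{align*}
which is exactly the bound in the statement. This holds for every $w\geq 3$, as Lemma \ref{lem:CPECC_e(G_i)} requires.

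The only subtle point is the hypothesis check in the second part. Removing $t$ vertices of $G_i$ must be matched with a $t$-subset of $[n]$, and vertices outside $V_i$ may be added freely to pad $T$. Everything substantive was already done in Lemma \ref{lem:CPECC_e(G_i)}.
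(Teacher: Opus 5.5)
Your proposal is correct and follows the paper's proof essentially step for step. The $E_i$ are pairwise disjoint because $|B_i\cap B_j|\le 3/2$; Lemma \ref{lem:CPECC_e(G_i)} gives $e(G_i)\ge (2w-3)(t+1)$; and summing inside $\binom{[n]}{2}$ gives the bound. You also check the lemma's hypothesis explicitly through Property $\textbf{A}(t)$, padding any set of at most $t$ vertices of $G_i$ to a $t$-subset of $[n]$. The paper leaves that check implicit, and your version of it is sound.
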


\begin{proof}
Let $\mathcal{C}={\cal C}_1\cup {\cal C}_2\cup\ldots\cup {\cal C}_M$ be an $(n,t,w,w-2)$-CPECC code and ${\cal B}={\cal P}_1\cup\ldots\cup{\cal P}_M$ the corresponding set-theoretic representation of $\mathcal{C}$ over $[n]$.
Let $G_i=(V_i,E_i)$ be a graph for every $i\in[M]$, where $V_i=\bigcup_{B\in\mathcal P_i}B$ and $E_i=\bigcup_{B\in\mathcal P_i}\{\{u,v\}:u,v\in B,u\neq v\}$.
Since every two codewords from different $\mathcal{C}_i$'s have Hamming distance at least $2(w-2)+1$, any two $w$-subsets from different $\mathcal{P}_i$'s have at most one common point, and so $E_1,E_2,\dots,E_M$ are pairwise disjoint.
Thus $$M\leq\frac{\binom{n}{2}}{\min_{i\in[M]}e(G_i)}.$$
It follows from Lemma \ref{lem:CPECC_e(G_i)} that $e(G_i)\geq(2w-3)(t+1)$ for each $i\in[M]$. Thus
$$M\leq \frac{\binom{n}{2}}{(2w-3)(t+1)}=\frac{n(n-1)}{2(2w-3)(t+1)}.$$
\qed
\end{proof}

\subsection{The construction of optimal $(n,t,w,w-2)$-CPECC codes}

This subsection confirms the construction part of the Conjecture \ref{conj-1} and provides some new families of optimal $(n,t,w,w-2)$-CPECC codes.

\begin{theorem}\label{thm:CPECC_construct_1}
	If there exists an $(n,2w-3,1)$-RBIBD, $w\geq3$, $l(2w-3)|n$ and $t=l(w-2)-1$, then
\begin{align*}
 C'(n,t,w,w-2)= \frac{n(n-1)}{2(2w-3)(t+1)}.
\end{align*}
\end{theorem}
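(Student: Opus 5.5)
The plan is to match the upper bound of Theorem \ref{thm:ub_(n,t,w,w-2)-CPECC} with an explicit construction from the resolvable design. Since Theorem \ref{thm:ub_(n,t,w,w-2)-CPECC} already gives $C'(n,t,w,w-2)\leq \frac{n(n-1)}{2(2w-3)(t+1)}$, it suffices to build an $(n,t,w,w-2)$-CPECC code of exactly this size. Let $(V,\mathcal{F})$ be an $(n,2w-3,1)$-RBIBD. Its parameters are:
\begin{itemize}
\item there are $r=\frac{n-1}{2w-4}$ parallel classes;
\item each class contains $\frac{n}{2w-3}$ pairwise disjoint blocks;
\item $|\mathcal{F}|=\frac{n(n-1)}{(2w-3)(2w-4)}$.
\end{itemize}
Because $t+1=l(w-2)$, the target size is
$$\frac{n(n-1)}{2(2w-3)\,l(w-2)}=\frac{|\mathcal{F}|}{l}.$$
So each codeword class $\mathcal{P}_i$ should consume exactly $l$ blocks of the design.

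\emph{Construction.} Since $l(2w-3)\mid n$, the number of blocks in each parallel class, $\frac{n}{2w-3}$, is divisible by $l$. Split every parallel class into $\frac{n}{l(2w-3)}$ groups of $l$ (pairwise disjoint) blocks. For a group $\{F_1,\dots,F_l\}$, define
$$\mathcal{P}=\bigcup_{j=1}^{l}\binom{F_j}{w}.$$
Every member of $\mathcal{P}$ has size exactly $w$, so Property $\textbf{B}'(w)$ holds. The number of classes is
$$r\cdot\frac{n}{l(2w-3)}=\frac{n(n-1)}{2(2w-3)l(w-2)},$$
which is the required value.

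\emph{Property $\textbf{A}(t)$.} Let $T$ be any set with $|T|=t=l(w-2)-1$. Suppose every $F_j$ lost at least $w-2$ points to $T$. Since the $F_j$ are pairwise disjoint, this would force $|T|\geq l(w-2)>t$, a contradiction. Hence some $F_j$ satisfies $|F_j\cap T|\leq w-3$. Then $|F_j\setminus T|\geq (2w-3)-(w-3)=w$, so $F_j\setminus T$ contains a $w$-subset lying in $\mathcal{P}$ and avoiding $T$. This pigeonhole step is the heart of the argument, and the choice $t=l(w-2)-1$ is exactly what makes it work.

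\emph{Property $\textbf{C}(w-2)$.} Members of different classes come from distinct blocks of the BIBD (distinct groups use distinct blocks). Two distinct blocks of an $(n,2w-3,1)$-BIBD share at most one point, so two such $w$-subsets meet in at most one point. Their symmetric difference therefore has size at least $2w-2\geq 2(w-2)+1$.

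The classes are pairwise disjoint as collections, so all three properties hold and we have a code of the required size. Combined with the upper bound, this gives the stated equality. I expect no real obstacle beyond checking the divisibility needed for the grouping, which is guaranteed by the hypothesis $l(2w-3)\mid n$.
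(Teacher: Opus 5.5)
Your proposal is correct and is essentially the paper's proof. The paper uses the same construction: it splits each parallel class of the $(n,2w-3,1)$-RBIBD into groups of $l$ blocks, takes all $w$-subsets of the blocks in a group as one class, and matches the count against the upper bound of Theorem \ref{thm:ub_(n,t,w,w-2)-CPECC}. Your pigeonhole argument for Property $\textbf{A}(t)$ and your one-point-intersection argument for Property $\textbf{C}(w-2)$ spell out the verification that the paper calls routine.
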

\begin{proof}
Let $(X,\mathcal{B})$ be an $(n,2w-3,1)$-RBIBD,  which admits a resolution of parallel classes $\mathcal{P}_i=\big\{B_{i,h}: 1\leq h\leq\frac{n}{2w-3}\big\}$, $1\leq i\leq \frac{n-1}{2w-4}$. Set
\begin{align*}
\mathcal C_{i,j}=\bigcup_{0\leq r\leq l-1}\binom{B_{i,jl-r}}{w},\ \text{where }  1\leq i\leq \frac{n-1}{2w-4}, 1\leq j\leq \frac{n}{l(2w-3)},
\end{align*}
and
$$\mathcal C=\bigcup_{i=1}^{\frac{n-1}{2w-4}}\bigcup_{j=1}^{ \frac{n}{l(2w-3)}}\mathcal C_{i,j}.$$
Note that
$\frac{n-1}{2w-4}\times\frac{n}{l(2w-3)}=\frac{n(n-1)}{2l(w-2)(2w-3)}=\frac{n(n-1)}{2(2w-3)(t+1)}.$
It is routine to check that $\mathcal C$ is an $(n,t,w,w-2)$-CPECC code of size $\frac{n(n-1)}{2(2w-3)(t+1)}$.\qed
\end{proof}

Applying Theorems \ref{thm:asy_RBIBD}, \ref{thm:ub_(n,t,w,w-2)-CPECC} and \ref{thm:CPECC_construct_1}, the following corollary solves the construction part of the Conjecture \ref{conj-1}.
\begin{corollary}\label{cor:coj_solve_2}
When $w\geq3$, $n>e^{e^{(2w-3)^{12(2w-3)^2}}}$, $\left \{
\begin {aligned}
&n\equiv 0\pmod {l(2w-3)}\\
&n\equiv 1\pmod {(2w-4)}\\
\end {aligned}
\right.$ and $t=l(w-2)-1$, it holds that
\begin{align*}
 C'(n,t,w,w-2)= \frac{n(n-1)}{2(2w-3)(t+1)}.
\end{align*}
\end{corollary}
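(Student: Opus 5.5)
The corollary is the combination of three earlier results: the upper bound of Theorem \ref{thm:ub_(n,t,w,w-2)-CPECC}, the construction of Theorem \ref{thm:CPECC_construct_1}, and the asymptotic existence of resolvable designs in Theorem \ref{thm:asy_RBIBD}. The plan is to check that the hypotheses of the corollary supply exactly what Theorem \ref{thm:CPECC_construct_1} needs, namely an $(n,2w-3,1)$-RBIBD together with $l(2w-3)\mid n$ and $t=l(w-2)-1$.

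First, set $k=2w-3$ and note that $k\ge 3$ since $w\ge 3$. Then $k-1=2w-4$, and the assumed lower bound $n>e^{e^{(2w-3)^{12(2w-3)^2}}}$ is precisely $n>e^{e^{k^{12k^2}}}$.

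The congruence conditions of Theorem \ref{thm:asy_RBIBD} follow directly. The hypothesis $n\equiv 1\pmod{2w-4}$ is $n\equiv 1\pmod{k-1}$. The hypothesis $n\equiv 0\pmod{l(2w-3)}$ gives $n\equiv 0\pmod{k}$. Theorem \ref{thm:asy_RBIBD} therefore yields an $(n,2w-3,1)$-RBIBD.

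Next, apply Theorem \ref{thm:CPECC_construct_1}. Its remaining hypotheses $w\ge 3$, $l(2w-3)\mid n$ and $t=l(w-2)-1$ are assumed in the corollary. It gives $C'(n,t,w,w-2)=\frac{n(n-1)}{2(2w-3)(t+1)}$.

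Theorem \ref{thm:CPECC_construct_1} already asserts the equality, with the upper-bound half supplied by Theorem \ref{thm:ub_(n,t,w,w-2)-CPECC}. Even if one reads Theorem \ref{thm:CPECC_construct_1} as only providing a code of that size, the bound from Theorem \ref{thm:ub_(n,t,w,w-2)-CPECC}, valid for all $w\ge 3$, closes the gap.

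There is essentially no obstacle. The only care needed is to confirm that $l$ is a positive integer, so that $t\ge 0$ and $l(2w-3)\mid n$ implies $(2w-3)\mid n$. One should also confirm that the code size in Theorem \ref{thm:CPECC_construct_1} is an integer, which follows because $(2w-4)\mid n-1$ and $l(2w-3)\mid n$.
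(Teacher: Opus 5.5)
Your proposal is correct and matches the paper's argument. The paper likewise derives the corollary by applying Theorem \ref{thm:asy_RBIBD} with $k=2w-3$ to obtain the $(n,2w-3,1)$-RBIBD, and then invoking Theorem \ref{thm:CPECC_construct_1}, whose upper-bound half is Theorem \ref{thm:ub_(n,t,w,w-2)-CPECC}.
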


The following two lemmas give a construction of an optimal $(n,t,4,2)$-CPECC code and an optimal $(n,t,5,3)$-CPECC code by resolvable balanced incomplete block designs.
\begin{lemma}\label{lem:C'(n,9,4,2)}
Let $t\equiv 1 \pmod {2}$, $n\equiv 5 \pmod {20}$, $n\equiv 0\pmod {\frac{5(t+1)}{2}}$ and $n\not\in\{45,345,465,645\}$. Then
$$C'(n,t,4,2)= \frac{n(n-1)}{10(t+1)}.$$
\end{lemma}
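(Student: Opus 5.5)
This is the case $w=4$ of Theorem \ref{thm:CPECC_construct_1}, so the plan is to check its hypotheses and then combine the resulting construction with the upper bound of Theorem \ref{thm:ub_(n,t,w,w-2)-CPECC}.

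With $w=4$ we have $2w-3=5$ and $w-2=2$, so the required design is an $(n,5,1)$-RBIBD. Since $n\equiv 5\pmod{20}$ and $n\notin\{45,345,465,645\}$, Theorem \ref{thm:rbibd-5} provides such a design.

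Next, $t$ is odd, so we may write $t=2l-1$ with $l=\frac{t+1}{2}$ a positive integer. Then $t=l(w-2)-1$, as Theorem \ref{thm:CPECC_construct_1} requires. The condition $l(2w-3)\mid n$ reads $5l\mid n$, that is, $\frac{5(t+1)}{2}\mid n$, which is exactly the hypothesis $n\equiv 0\pmod{\frac{5(t+1)}{2}}$.

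With these hypotheses verified, Theorem \ref{thm:CPECC_construct_1} gives
\[
C'(n,t,4,2)=\frac{n(n-1)}{2\cdot 5\,(t+1)}=\frac{n(n-1)}{10(t+1)}.
\]

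Concretely, the construction takes each parallel class of the $(n,5,1)$-RBIBD and groups its $\frac{n}{5}$ blocks into sets of $l$ blocks. For each such group, one code part consists of all $4$-subsets of the $l$ blocks in the group. This produces
\[
\frac{n-1}{4}\cdot\frac{n}{5l}=\frac{n(n-1)}{10(t+1)}
\]
parts. The matching upper bound comes from Theorem \ref{thm:ub_(n,t,w,w-2)-CPECC} with $w=4$.

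The only step that needs care is confirming that the construction in Theorem \ref{thm:CPECC_construct_1} really satisfies Property $\textbf{A}(t)$. The $l$ blocks in a part are pairwise disjoint $5$-sets, and each block has size $5$, so at least $2$ removed points are needed to destroy all $4$-subsets of a single block. Destroying all $4$-subsets of every block in the part therefore takes at least $2l=t+1$ removed points, so any $t$ removed points leave some block with a $4$-subset avoiding them.

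Property $\textbf{C}(2)$ also holds. Two blocks of a BIBD share at most one point, and blocks within the same parallel class are disjoint. Hence any two $4$-subsets taken from different parts meet in at most one point, giving symmetric difference at least $6\geq 5$.

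Since the lemma is a direct specialization of Theorem \ref{thm:CPECC_construct_1}, the real work is the bookkeeping above rather than any new obstacle.
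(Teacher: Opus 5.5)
Your proposal is correct and follows the paper's proof exactly. You obtain the $(n,5,1)$-RBIBD from Theorem~\ref{thm:rbibd-5}, then combine Theorem~\ref{thm:CPECC_construct_1} (with $w=4$, $l=\frac{t+1}{2}$) with the upper bound of Theorem~\ref{thm:ub_(n,t,w,w-2)-CPECC}; your explicit checks of Properties $\textbf{A}(t)$ and $\textbf{C}(2)$ fill in the ``routine'' verification the paper leaves to the reader.
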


\begin{proof}
It follows from Theorem \ref{thm:rbibd-5} that there exists an $(n,5,1)$-RBIBD. Apply Theorems \ref{thm:ub_(n,t,w,w-2)-CPECC} and  \ref{thm:CPECC_construct_1} with $w=4$. \qed
\end{proof}

\begin{lemma}\label{lem:CPECC_construct_2}
Let $t\equiv 2 \pmod {3}$, $n\equiv 7 \pmod {42}$, $n\equiv 0\pmod {\frac{7(t+1)}{3}}$ and $n\geq294469$. Then
$$C'(n,t,5,3)= \frac{n(n-1)}{14(t+1)}.$$
\end{lemma}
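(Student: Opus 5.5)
This is the $w=5$ analogue of Lemma \ref{lem:C'(n,9,4,2)}, and I will prove it the same way: invoke an existence result for a resolvable design, then combine Theorem \ref{thm:CPECC_construct_1} (construction) with Theorem \ref{thm:ub_(n,t,w,w-2)-CPECC} (upper bound).

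With $w=5$ we have $2w-3=7$ and $w-2=3$. Since $n\equiv 7 \pmod{42}$ and $n\geq 294469$, Theorem \ref{thm:rbibd-7} gives an $(n,7,1)$-RBIBD.

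Next I match the parameters of Theorem \ref{thm:CPECC_construct_1}. Because $t\equiv 2\pmod 3$, the number $l=\frac{t+1}{3}$ is a positive integer with $t=l(w-2)-1=3l-1$. The hypothesis $n\equiv 0\pmod{\frac{7(t+1)}{3}}$ is exactly $l(2w-3)=7l\mid n$. So every hypothesis of Theorem \ref{thm:CPECC_construct_1} holds with $w=5$. That theorem then yields
$$C'(n,t,5,3)=\frac{n(n-1)}{2\cdot 7\,(t+1)}=\frac{n(n-1)}{14(t+1)}.$$
If one wants the two halves separately, the construction in Theorem \ref{thm:CPECC_construct_1} gives the lower bound, and Theorem \ref{thm:ub_(n,t,w,w-2)-CPECC} with $w=5$ gives the matching upper bound.

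There is no real obstacle; the only thing to check is the arithmetic translation of the congruence conditions. In particular, $n\equiv 7\pmod{42}$ ensures both $n\equiv 0\pmod 7$ and $n\equiv 1\pmod 6$, which are the divisibility conditions used in the construction (number of blocks per parallel class, and number of parallel classes $\frac{n-1}{6}$).
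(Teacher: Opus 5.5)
Your proposal is correct and follows the paper's proof: you obtain the $(n,7,1)$-RBIBD from Theorem \ref{thm:rbibd-7} and then apply Theorems \ref{thm:CPECC_construct_1} and \ref{thm:ub_(n,t,w,w-2)-CPECC} with $w=5$ and $l=\frac{t+1}{3}$. Your translation of the congruence conditions into $7l\mid n$ and $n\equiv 1\pmod 6$ is also correct.
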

\begin{proof}
By Theorem \ref{thm:rbibd-7}, an $(n,7,1)$-RBIBD exists. The result
follows from Theorems \ref{thm:ub_(n,t,w,w-2)-CPECC} and \ref{thm:CPECC_construct_1} with $w=5$.
\qed
\end{proof}

The following construction of $(n,t,w,w-2)$-CPECC codes is obtained by means of graph packings.
\begin{theorem}\label{thm:CPECC_construct_3}
    Let $t,w$ be fixed positive integers with $w\ge 3$. Suppose that $n$ is sufficiently large. Then
    \begin{equation}\label{equa:cpecc}
        C'(n,t,w,w-2)=\left\lfloor \frac{n(n-1)}{2(2w-3)(t+1)} \right\rfloor,
    \end{equation}
    whenever one of the following conditions holds:
    \begin{itemize}
        \item[$(i)$]  $t\in \{(w-1)x+(w-2)y-1: x,y\in\mathbb{Z}^+\}$;
        \item[$(ii)$]  $w-1\mid t+1$, $2w-3\mid n-1$, and $\frac{n(n-1)}{2w-3}\not\equiv 1,2,\dots,2w-3 \pmod{2(t+1)}$;
        \item[$(iii)$]  $w-2\mid t+1$, $2w-4\mid n-1$, and $\frac{n(n-1)}{2w-4}\not\equiv 1,2,\dots,2w-4 \pmod{\frac{(2w-3)(t+1)}{w-2}}$.
    \end{itemize}
    In particular, when $t\ge (w-1)(w-2)$, \eqref{equa:cpecc} holds for all sufficiently large $n$.
\end{theorem}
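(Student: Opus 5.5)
The upper bound is immediate from Theorem \ref{thm:ub_(n,t,w,w-2)-CPECC}: since $M$ is an integer, $C'(n,t,w,w-2)\le \lfloor n(n-1)/(2(2w-3)(t+1))\rfloor$. So the work is a matching lower bound. I will get it from a graph packing of $K_n$ by a suitable graph $H$ with $e(H)=(2w-3)(t+1)$ edges, applying Theorem \ref{thm:graph-packing}.

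\textbf{Choice of $H$.} Take $x$ disjoint copies of $K_w$ and $y$ disjoint copies of $K_{2w-3}$ (a "bowtie" of $K_w$'s), arranged as follows.
\begin{itemize}
\item Each copy of $K_w$ is one block. Its edges form a single $w$-set, so it has $\binom{w}{2}$ edges and it takes $1$ vertex to destroy all its $K_w$'s.
\item Each copy of $K_{2w-3}$ carries the blocks $\binom{F}{w}$, as in Theorem \ref{thm:CPECC_construct_1}. Destroying all $K_w$'s inside $K_{2w-3}$ requires deleting $w-2$ vertices. It contributes $\binom{2w-3}{2}=(2w-3)(w-2)$ edges.
\end{itemize}
Rather than using $K_w$ copies directly, I will use a $(w-1)$-vertex-cost gadget with $(2w-3)(w-1)$ edges, to match the ratio $2w-3$ edges per unit of transversal. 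A natural candidate is $K_{2w-2}$ minus a perfect matching. It is $(2w-4)$-regular with $(2w-2)(w-2)$ edges, which does not match, so I will try other candidates.

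I want a disjoint union $H$ of gadgets such that:
\begin{itemize}
\item every $t$-set of vertices misses some $K_w$ in $H$, i.e.\ $\tau_w(H)\ge t+1$;
\item $e(H)=(2w-3)(t+1)$;
\item all $K_w$'s used are edge-disjoint, or at least any two $w$-sets in the same class may intersect arbitrarily.
\end{itemize}
Within a class, intersections are unconstrained. Across classes, edge-disjointness of the $H$-copies gives $|B_i\cap B_j|\le1$, which is exactly Property \textbf{C}$(w-2)$. So any $H$-packing of $K_n$ with $P$ copies yields a CPECC code of size $P$, with each class being all the $K_w$'s of one copy. For case (i), $t+1=(w-1)x+(w-2)y$ suggests $x$ gadgets of cost $w-1$ and $y$ copies of $K_{2w-3}$ of cost $w-2$. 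For cost $w-1$ with $(2w-3)(w-1)$ edges, I will use $K_{2w-1}$ minus a suitable sparse graph, or a union of $K_w$'s on $2w-2$ points arranged so that $w-1$ deletions are needed. The step I expect to be the main obstacle is pinning down this gadget: a graph with exactly $(2w-3)(w-1)$ edges and $K_w$-transversal number $w-1$, whose degrees have small gcd. It must be tight in Lemma \ref{lem:CPECC_e(G_i)}, so every vertex degree $d$ must satisfy equality in Lemma \ref{lem:inequality_1}, forcing $d\in\{2w-4,2w-3\}$. A gadget with degrees $2w-3$, for instance $K_{2w-2}$ minus a perfect matching plus adjustments, is what I would search for first.

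\textbf{Applying Theorem \ref{thm:graph-packing}.} With $H$ built, $P(H,K_n)$ is given by that theorem. If $\gcd(H)=1$, then $P(H,K_n)=\lfloor n(n-1)/(2e(H))\rfloor$ with no exceptional case, since $b=1$ would need $n(n-1)\equiv1\pmod{2e(H)}$, which is impossible because $n(n-1)$ is even. This gives case (i), provided mixing both gadget types forces $\gcd(H)=1$: degrees $2w-4$ and $2w-3$ are coprime. For (ii), use only cost-$(w-1)$ gadgets, $(t+1)/(w-1)$ of them, with all degrees $2w-3$. For (iii), use only copies of $K_{2w-3}$, $(t+1)/(w-2)$ of them, with all degrees $2w-4$. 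In these cases $\gcd(H)=2w-3$ or $2w-4$, and the stated divisibility and non-residue hypotheses are precisely those making the main formula of Theorem \ref{thm:graph-packing} apply and equal $\lfloor n(n-1)/(2e(H))\rfloor$.

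\textbf{The "in particular" claim.} For $t\ge (w-1)(w-2)$, the Frobenius coin problem for the coprime pair $w-1,w-2$ shows that every integer greater than $(w-1)(w-2)$ is representable as $(w-1)x+(w-2)y$ with positive $x,y$. This is the threshold for positive coefficients, $ab$, so every $t+1\ge(w-1)(w-2)+1$ is representable and case (i) applies.
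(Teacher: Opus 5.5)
Your overall plan is the paper's. You choose $H$ as a vertex-disjoint union of cliques with $e(H)=(2w-3)(t+1)$ and $\tau_w(H)\ge t+1$, take a maximum $H$-packing of $K_n$ from Theorem~\ref{thm:graph-packing}, and let each class be all $w$-cliques of one copy. Several parts are correct:
\begin{itemize}
\item the derivation of Property \textbf{C}$(w-2)$ from edge-disjointness of the copies;
\item your treatment of the exceptional case of Theorem~\ref{thm:graph-packing} when $\gcd(H)$ is $1$, $2w-3$ or $2w-4$;
\item the Frobenius threshold for positive coefficients.
\end{itemize}
The genuine gap is that you never produce the gadget of transversal cost $w-1$. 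Your opening suggestion of copies of $K_w$ is too edge-expensive for $w\ge4$, since $\binom{w}{2}>2w-3$, and you abandon it. You then explicitly leave the replacement open as ``the main obstacle''. As written, only case (iii) is established, since it uses copies of $K_{2w-3}$ alone. Case (i) needs $x\ge1$ gadgets of cost $w-1$, case (ii) uses only such gadgets, and the ``in particular'' claim goes through (i). So none of these is proved.

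The missing gadget is simply $K_{2w-2}$, which is what the paper uses. It has $\binom{2w-2}{2}=(2w-3)(w-1)$ edges, and $\tau_w(K_{2w-2})=w-1$ because deleting any $w-2$ vertices still leaves a $K_w$. Your own tightness heuristic points straight at it: a $(2w-3)$-regular graph with $(2w-3)(w-1)$ edges has exactly $2w-2$ vertices and is therefore complete. The candidate you tested, $K_{2w-2}$ minus a perfect matching, is $(2w-4)$-regular, and ``adjusting'' it back to degree $2w-3$ just returns $K_{2w-2}$.

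Take $H$ to be the disjoint union of $x$ copies of $K_{2w-2}$ and $y$ copies of $K_{2w-3}$. Since $\tau_w$ is additive over components, $\tau_w(H)=(w-1)x+(w-2)y=t+1$ and $e(H)=(2w-3)(t+1)$. In case (i), $x,y\ge1$ gives $\gcd(H)=\gcd(2w-3,2w-4)=1$. In case (ii), $H$ is $(t+1)/(w-1)$ copies of $K_{2w-2}$ and is $(2w-3)$-regular. With this choice the rest of your argument goes through unchanged and coincides with the paper's proof.
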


\begin{proof}
    The upper bound part follows from Theorem \ref{thm:ub_(n,t,w,w-2)-CPECC}. Now we construct optimal CPECC codes by using Theorem~\ref{thm:graph-packing}.

    $(i)$ If $t\in \{(w-1)x+(w-2)y-1: x,y\in\mathbb{Z}^+\}$, then write $t=(w-1)x+(w-2)y-1$, where $x,y\in\mathbb{Z}^+$.
    Let $H$ be the vertex-disjoint union of $x$ copies of $K_{2w-2}$ and $y$ copies of $K_{2w-3}$. Note that $\gcd(H)=1$ and
      \[
      e(H)=\binom{2w-2}{2}x+\binom{2w-3}{2}y=(2w-3)(t+1).
      \]
    Let $\{H_1,\dots,H_m\}$ be a largest collection of edge-disjoint copies of $H$ in $K_n$, where $m=P(H,K_n)$. By Theorem~\ref{thm:graph-packing}, for sufficiently large $n$,
    \[
    m=P(H,K_n)=\left\lfloor \frac{n(n-1)}{2e(H)} \right\rfloor=\left\lfloor \frac{n(n-1)}{2(2w-3)(t+1)} \right\rfloor.
    \]

    For each $i\in [m]$, define
    \[
    \mathcal P_i=\left\{B\in\binom{V(H_i)}{w}: H_i[B]\cong K_w\right\},
    \]
    where $H_i[B]$ is an induced subgraph of $H_i$.
    It is easy to verify that $\cup_{i\in [m]} \mathcal P_i$ forms an $(n,t,w,w-2)$-CPECC code of size $\left\lfloor\frac{n(n-1)}{2(2w-3)(t+1)}\right\rfloor$.
    Therefore, for all sufficiently large $n$, when (i) holds,
     \[
     C'(n,t,w,w-2)=\left\lfloor\frac{n(n-1)}{2(2w-3)(t+1)}\right\rfloor.
     \]

    If (ii) (resp. (iii)) holds, then write $(w-1)x=t+1$ (resp. $(w-2)x=t+1$), and so let $H$ be the vertex-disjoint union of $x$ copies of $K_{2w-2}$ (resp. $K_{2w-3}$), and note that $\gcd(H)=2w-3$ (resp. $2w-4$). Then the conclusion holds similarly as (i).

In particular, condition~(i) holds for all \(t\ge (w-1)(w-2)\). Indeed, for any two coprime positive integers \(a_1\) and \(a_2\), the Frobenius number \(g(a_1,a_2)\) is the largest integer that cannot be expressed as a nonnegative integer combination of \(a_1\) and \(a_2\). This is the classical Frobenius coin problem, and Sylvester showed that \(g(a_1,a_2)=a_1a_2-a_1-a_2\)(see, e.g., \cite{Alf05}). Since \(\gcd(w-1,w-2)=1\), whenever
\[
t\ge (w-1)+(w-2)-1+\bigl(g(w-1,w-2)+1\bigr)
   =(w-1)(w-2),
\]
there exist \(x,y\in\mathbb Z^+\) such that \(t=(w-1)x+(w-2)y-1. \)
Thus, condition~(i) is automatically satisfied whenever
\(t\ge (w-1)(w-2)\).\qed
\end{proof}

\section{Optimal $(n,t,w,w-2)$-LPECC codes}\label{sec:optimal_LPECC}
This section gives an upper bound of $(n,t,w,w-2)$-LPECC codes and a construction of optimal $(n,t,w,w-2)$-LPECC codes derived from optimal $(n+1,t,w,w-2)$-CPECC codes.

To prove the upper bound of $(n,t,w,w-2)$-LPECC codes, we show the following strengthening of Lemma \ref{lem:CPECC_e(G_i)}.
Let $\mathcal F$ be a family of subsets of $[n]$.
A set $T\subseteq[n]$ is a {\em transversal} of a family $\mathcal F$ if $|T \cap F|\neq\emptyset$ for every $F\in\mathcal F$.
The {\em transversal number} of $\mathcal F$ is the minimum size of all transversals of $\mathcal F$.

\begin{theorem}\label{thm:clique-transversal-nonuniform}
    Let $w\ge 2$ be an integer. Let $G=(V,E)$ be a graph. Then for any subset $V_0\subseteq V$, the transversal number of the family \[
    \mathcal{F}_w(G;V_0)=\left\{A\in\binom{V}{w}: G[A]\cong K_w\right\}\cup \left\{B\in \binom{V_0}{w-1}:G[B]\cong K_{w-1}\right\}
    \]
    is at most $\frac{e(G)+|V_0|}{2w-3}$.
\end{theorem}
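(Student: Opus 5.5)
The plan is to rerun the argument of Lemma~\ref{lem:CPECC_e(G_i)}, but to open up the proof of Theorem~\ref{thm:alon_induced} rather than use it as a black box. The point is to let the degeneracy threshold depend on the vertex: $f_v=w-1$ for $v\notin V_0$ and $f_v=w-2$ for $v\in V_0$. Vertices of $V_0$ then pay one extra unit each, which is where the additive term $|V_0|$ in the bound comes from.

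\textbf{Construction.} Take a uniformly random linear ordering $\sigma$ of $V$. Let $S$ be the set of vertices $v$ having fewer than $f_v$ neighbours that precede $v$ in $\sigma$, and put $T=V\setminus S$.

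\textbf{$T$ is a transversal of $\mathcal{F}_w(G;V_0)$.} Suppose some member $A\in\mathcal{F}_w(G;V_0)$ lies entirely inside $S$, and let $v$ be the last vertex of $A$ in $\sigma$. All other vertices of $A$ are neighbours of $v$ that precede it.
\begin{itemize}
\item If $A$ is a $w$-clique, then $v$ has at least $w-1\ge f_v$ earlier neighbours, so $v\notin S$.
\item If $A\subseteq V_0$ is a $(w-1)$-clique, then $v\in V_0$ has at least $w-2=f_v$ earlier neighbours, so again $v\notin S$.
\end{itemize}
Either way we contradict $A\subseteq S$. So every member of the family meets $T$, and it suffices to find an ordering $\sigma$ with $|T|\le \frac{e(G)+|V_0|}{2w-3}$.

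\textbf{Computing $\mathbb{E}|T|$.} The rank of $v$ among $\{v\}\cup N(v)$ is uniform on $\{1,\dots,d(v)+1\}$. Hence $\Pr[v\in S]=\min\{1,\frac{f_v}{d(v)+1}\}$, exactly as in Alon's proof, and
\[
\Pr[v\in T]=\max\Big\{0,\frac{d(v)+1-f_v}{d(v)+1}\Big\}.
\]
\begin{itemize}
\item For $v\notin V_0$ this equals $\max\{0,\frac{d(v)-w+2}{d(v)+1}\}$. By Lemma~\ref{lem:inequality_1} with $\ell=d(v)$, it is at most $\frac{d(v)}{2(2w-3)}$.
\item For $v\in V_0$ we need the bound $\frac{d(v)-w+3}{d(v)+1}\le \frac{d(v)+2}{2(2w-3)}$. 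Substituting $\ell=d(v)+1$ turns it into $\frac{\ell-w+2}{\ell}\le\frac{\ell+1}{2(2w-3)}$. Clearing denominators, this is $\ell(\ell+1)-2(2w-3)(\ell-w+2)\ge 0$, which is exactly the quadratic inequality $(\ell-(2w-4))(\ell-(2w-3))\ge 0$ from the proof of Lemma~\ref{lem:inequality_1}, with $\ell=d(v)+1\ge 1$. When the maximum is $0$ the bound is trivial.
\end{itemize}

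\textbf{Conclusion.} Summing over $V$ gives
\[
\mathbb{E}|T|\le \sum_{v}\frac{d(v)}{2(2w-3)}+\frac{|V_0|}{2w-3}=\frac{e(G)+|V_0|}{2w-3}.
\]
Some ordering $\sigma$ achieves $|T|\le\mathbb{E}|T|$, and for that $\sigma$ the set $T$ is the required transversal.

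\textbf{Main obstacle.} The only delicate step is the vertex-dependent threshold. One has to check that the last-vertex argument still excludes $(w-1)$-cliques inside $V_0$, and that the per-vertex inequality for $v\in V_0$ reduces to the same quadratic as Lemma~\ref{lem:inequality_1} after the shift $\ell=d(v)+1$. The edge case $w=2$ also needs a look: there $f_v=0$ on $V_0$, so every vertex of $V_0$ goes into $T$ at cost $1$. This is still at most $\frac{d(v)+2}{2}$, so the bound survives.
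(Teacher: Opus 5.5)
Your proof is correct and is essentially the paper's argument. Your set $T=\{v : v \text{ has at least } f_v \text{ earlier neighbours in } \sigma\}$ is exactly the paper's $T^{w-1}_{\prec}\cup T^{w-2}_{\prec}$, and the last-vertex transversal argument and the per-vertex expectation bounds match. The only cosmetic difference is in the $V_0$ inequality. You reduce it directly to the quadratic $(\ell-(2w-4))(\ell-(2w-3))\ge 0$ with $\ell=d(v)+1$, whereas the paper applies Lemma~\ref{lem:inequality_1} at $\ell=d+1$ and then rescales by $\frac{d+2}{d+1}$.
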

\begin{proof}
Choose a uniformly random linear ordering \(\prec\) of the vertex set \(V\).
Define
\[
T_{\prec}^{w-1}=\{v\in V\setminus V_0:|N^-_{\prec}(v)|\ge w-1\},
\]
and
\[
T_{\prec}^{w-2}=\{v\in V_0:|N^-_{\prec}(v)|\ge w-2\},
\]
where $N_G(v)=\{u\in V:\{u,v\}\in E\}$ and $N^-_{\prec}(v)=\{u\in N_G(v): u\prec v\}.$

For every copy of \(K_w\) in \(G\) \(Q\), let \(v\) be the last vertex of \(Q\) in the ordering \(\prec\).
The other \(w-1\) vertices of \(Q\) are all neighbors of \(v\), and all of them appear before \(v\). So \(|N^-_{\prec}(v)|\ge w-1\), which shows that $v\in T_{\prec}^{w-1}\cup T_{\prec}^{w-2}$. Hence $T_{\prec}^{w-1}\cup T_{\prec}^{w-2}$ is a transversal of $\{A\in\binom{V}{w}: G[A]\cong K_w\}$.
For every copy of \(K_{w-1}\) in \(G[V_0]\) \(Q\), let \(v\) be the last vertex of \(Q\) in the ordering \(\prec\).
The other \(w-2\) vertices of \(Q\) are all neighbors of \(v\), and all of them appear before \(v\). So \(|N^-_{\prec}(v)|\ge w-2\) and $v\in T_{\prec}^{w-2}$. Hence $T_{\prec}^{w-2}$ is a transversal of $\{B\in \binom{V_0}{w-1}:G[B]\cong K_{w-1}\}$.
To sum up, $T_{\prec}^{w-1}\cup T_{\prec}^{w-2}$ is a transversal of $\mathcal{F}_w(G;V_0)$.

    Fix a vertex \(v\), and write \(d=d_G(v).\) In a uniform random ordering, for any $v\in V\setminus V_0$,
    \[
      \mathbb{P}(v\in T_{\prec}^{w-1}) =
   \mathbb{P}(|N^-_{\prec}(v)|\ge w-1).
    \]
    Note that if \(d\le w-2\), then the probability is zero, otherwise,
    \[
    \mathbb{P}(|N^-_{\prec}(v)|\ge w-1)=\frac{d+1-(w-1)}{d+1}=\frac{d-w+2}{d+1}.
    \]
    Since \(\frac{d-w+2}{d+1}\le \frac{d}{2(2w-3)}\) by Lemma \ref{lem:inequality_1},
    \[
     \mathbb{E}[|T_{\prec}^{w-1}|]
    =
    \sum_{v\in V\setminus V_0}\mathbb{P}(v\in T_{\prec}^{w-1})
    \le
    \sum_{v\in V\setminus V_0} \frac{d_G(v)}{2(2w-3)}.
    \]
    For any \(v\in V_0\), if \(d\le w-3\), then the probability is zero. Otherwise,
    \[
      \mathbb{P}(v\in T_{\prec}^{w-2}) =
   \mathbb{P}(|N^-_{\prec}(v)|\ge w-2) \le
   \frac{d+1-(w-2)}{d+1}=\frac{d-w+3}{d+1}.
    \]
    Since $\frac{d-w+3}{d+2}\le \frac{d+1}{2(2w-3)}$ by Lemma \ref{lem:inequality_1} and $d+1>0$, $\frac{d-w+3}{d+1}\le \frac{d+2}{2(2w-3)}$. Note that
    \[
    \mathbb{E}[|T_{\prec}^{w-2}|]
    =
    \sum_{v\in V_0}\mathbb{P}(v\in T_{\prec}^{w-2})
    \le
    \sum_{v\in V_0} \frac{d_G(v)+2}{2(2w-3)}
    = \frac{|V_0|}{2w-3} + \sum_{v\in V_0} \frac{d_G(v)}{2(2w-3)}.
    \]
    Therefore,
    \[
    \mathbb{E}[|T_{\prec}^{w-1}\cup T_{\prec}^{w-2}|]
    \leq\mathbb{E}[|T_{\prec}^{w-1}|]+ \mathbb{E}[|T_{\prec}^{w-2}|]
    \leq \frac{|V_0|}{2w-3} + \sum_{v\in V} \frac{d_G(v)}{2(2w-3)}
    = \frac{e(G)+|V_0|}{2w-3}.
    \]
    Thus there exists $T_{\prec}^{w-1}\cup T_{\prec}^{w-2}$ satisfying $|T_{\prec}^{w-1}\cup T_{\prec}^{w-2}|\leq\frac{e(G)+|V_0|}{2w-3}$ with positive probability.

    Recall that $T_{\prec}^{w-1}\cup T_{\prec}^{w-2}$ is a transversal of $\mathcal{F}_w(G;V_0)$. By the definition of the transversal number, the transversal number of $\mathcal{F}_w(G;V_0)$ is at most $\frac{e(G)+|V_0|}{2w-3}$.\qed
\end{proof}

\begin{theorem}\label{thm:lpecc-upper}
    For any positive integers $n,t,w$ with $n\ge (2w-3)(t+1)$ and $w\ge 3$, it holds that
    \[
    C(n,t,w,w-2)\le \frac{n(n+1)}{2(2w-3)(t+1)}.
    \]
\end{theorem}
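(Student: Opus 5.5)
We follow the CPECC argument (Theorem \ref{thm:ub_(n,t,w,w-2)-CPECC}), with Theorem \ref{thm:clique-transversal-nonuniform} replacing Lemma \ref{lem:CPECC_e(G_i)} so that codewords of weight below $w$ are handled. Let $\mathcal B=\mathcal P_1\cup\dots\cup\mathcal P_M$ represent an $(n,t,w,w-2)$-LPECC code. First we reduce to the case where every block has size $w-1$ or $w$. A block $B$ with $|B|\le w-2$ in some $\mathcal P_i$ forces every block $B'$ of every other class to satisfy $|B\Delta B'|\ge 2w-3$, so $|B'|\ge 2w-3-|B|\ge w-1$. Hence at most one class contains blocks of size $\le w-2$. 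Discarding that class costs at most one codeword; we must check that the resulting slack is absorbed, or else argue that such a class contributes no more edges than $(2w-3)(t+1)$ and bound it separately.

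For classes whose blocks have sizes in $\{w-1,w\}$, two blocks from different classes meet in at most one point. Indeed, $|B_i\cap B_j|=(|B_i|+|B_j|-|B_i\Delta B_j|)/2\le (2w-(2w-3))/2<2$. Define $G_i$ on $[n]$ with edge set the union of the pairs inside the blocks of $\mathcal P_i$, and let $V_0^{(i)}$ be the union of the $(w-1)$-blocks of $\mathcal P_i$. These sets are disjoint across $i$: if $x\in B\cap B'$ with $B,B'$ of size $w-1$ in different classes, then $|B\Delta B'|\le 2w-4<2w-3$. Similarly, a $(w-1)$-block and a $w$-block from different classes meet in at most one point. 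The edge sets $E_i$ are pairwise disjoint, so
\begin{align*}
\sum_i\bigl(e(G_i)+|V_0^{(i)}|\bigr)\le \binom n2+n=\frac{n(n+1)}{2}.
\end{align*}

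Property $\textbf{A}(t)$ says that no $t$-set $T$ meets every block of $\mathcal P_i$. Every block of $\mathcal P_i$ is either a $K_w$ in $G_i$ or a $K_{w-1}$ inside $V_0^{(i)}$, so the blocks form a subfamily of $\mathcal F_w(G_i;V_0^{(i)})$. Any transversal of $\mathcal F_w(G_i;V_0^{(i)})$ therefore meets all blocks of $\mathcal P_i$. Thus the transversal number of $\mathcal F_w(G_i;V_0^{(i)})$ is at least $t+1$ (padding to size $t$ is possible since $n\ge t$). Theorem \ref{thm:clique-transversal-nonuniform} then gives $e(G_i)+|V_0^{(i)}|\ge (2w-3)(t+1)$. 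Summing over the classes yields $M(2w-3)(t+1)\le n(n+1)/2$, which is the claimed bound.

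The main obstacle is the exceptional class with small blocks. For that single class $\mathcal P_{i_0}$ we would instead set $V_0^{(i_0)}$ to be all vertices used by it. Enlarging $V_0$ only enlarges the family, and since $K_{w-1}$'s in $G[V_0]$ are included, small blocks can be handled by adding vertices. However, a block of size $\le w-2$ is not a $K_{w-1}$, so a direct argument is needed. Here the hypothesis $n\ge(2w-3)(t+1)$ should enter. The class $\mathcal P_{i_0}$ has no edges or points forced disjoint in the same way, but every other class avoids meeting its small block $B$ in more than $|B|-(w-2)\le 0$ points when $|B'|=w$. So other classes avoid $B$ entirely and live on $n-|B|$ points. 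Recounting with $n$ replaced by $n-|B|$ and adding the one class should still give at most $n(n+1)/(2(2w-3)(t+1))$ when $n\ge(2w-3)(t+1)$. This inequality check is the step we expect to require care.
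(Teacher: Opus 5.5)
Your treatment of the classes whose blocks all have size $w-1$ or $w$ is exactly the paper's argument. You take $V_0^{(i)}$ to be the union of the $(w-1)$-blocks, note that edges and the sets $V_0^{(i)}$ are disjoint across classes, and apply Theorem~\ref{thm:clique-transversal-nonuniform} to get $e(G_i)+|V_0^{(i)}|\ge(2w-3)(t+1)$.

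The gap is the exceptional class $\mathcal P_{i_0}$, which you leave unresolved. Your idea does work when $\mathcal P_{i_0}$ contains a \emph{nonempty} block $B$ with $|B|\le w-2$. For any block $B'$ of another class, the distance condition gives $|B\cap B'|\le\frac{|B|+|B'|-(2w-3)}{2}\le\frac12$. This also covers $|B'|=w-1$, and your expression $|B|-(w-2)$ is not the bound the distance condition gives. So all other classes live on at most $n-1$ points, and
\[
M\le 1+\frac{(n-1)n}{2(2w-3)(t+1)}\le\frac{n(n+1)}{2(2w-3)(t+1)}.
\]
The last step is equivalent to $n\ge(2w-3)(t+1)$, so it is a one-line check, not a delicate one.

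What your plan misses is the case where the only small block of $\mathcal P_{i_0}$ is $\emptyset$, i.e.\ the zero codeword. Then ``replacing $n$ by $n-|B|$'' gains nothing, and your count only yields $M\le 1+\frac{n(n+1)}{2(2w-3)(t+1)}$, which exceeds the claimed bound.

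The missing observation is that $\emptyset\in\mathcal P_{i_0}$ forces $|B'|=|\emptyset\,\Delta\,B'|\ge 2w-3$ for every block $B'$ of every other class. Together with $|B'|\le w$, this means $|B'|=w$, which is only possible when $w=3$; for $w\ge 4$ there are no other classes at all. Consequences:
\begin{itemize}
\item No other class has $(w-1)$-blocks, so the $+n$ term in your count disappears.
\item The remaining $M-1$ classes form an $(n,t,w,w-2)$-CPECC code.
\item The CPECC bound then gives $M-1\le\frac{n(n-1)}{2(2w-3)(t+1)}$, and the same inequality as above finishes.
\end{itemize}
The paper closes the proof with exactly this two-case split: $\emptyset\in\mathcal P_{i_0}$ versus a block $B_0$ with $1\le|B_0|\le w-2$. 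This split is where the hypothesis $n\ge(2w-3)(t+1)$ is used.
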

\begin{proof}
 Let $\mathcal{B}=\mathcal  P_1\cup\mathcal P_2\cup\dots\cup\mathcal P_M$ be an $(n,t,w,w-2)$-LPECC code, and let $G_i=(V_i,E_i)$ be a graph for every $i\in[M]$, where $V_i=\bigcup_{B\in\mathcal P_i}B$ and $E_i=\bigcup_{B\in\mathcal P_i}\{\{u,v\}:u,v\in B,u\neq v\}$.
 For each $i\in [M]$, let $\mathcal P_i^{w-1}=\mathcal P_i\cap \binom{[n]}{w-1}$ and denote $U^{w-1}(\mathcal P_i)=\bigcup_{B\in \mathcal P_i^{w-1}} B$.
 Define \[
  I_{\ge w-1}=\left\{i\in [M]: \mathcal P_i \subseteq\binom{[n]}{w}\cup \binom{[n]}{w-1}\right\}.
 \]
 By Property $\textbf{A}(t)$, for any $t$-set $T\subseteq[n]$ and any $i\in I_{\ge w-1}$, there exists some block $B\in \mathcal{P}_i$ such that $B\cap T =\emptyset$. This implies that the transversal number of $\mathcal{F}_w(G_i;U^{w-1}(\mathcal P_i))$ is at least $t+1$.
 It then follows from Theorem~\ref{thm:clique-transversal-nonuniform} that for any $i\in I_{\ge w-1}$,
 \begin{equation}\label{equa:e+v}
    |e(G_i)|+|U^{w-1}(\mathcal P_i)| \ge (2w-3)(t+1).
 \end{equation}

  Let $Y=\bigcup_{i\in I_{\ge w-1}} \bigcup_{B\in \mathcal P_i} B$. By Property $\textbf{C}(e)$, $|B_i\triangle B_j|\geq 2(w-2)+1=2w-3$  for any $B_i\in \mathcal P_i, B_j\in\mathcal P_j$ with $i\neq j$ and $i,j\in [M]$. It follows that $E_i\cap E_j=\emptyset$ and $U^{w-1}(\mathcal P_i)\cap U^{w-1}(\mathcal P_j)=\emptyset$ for any distinct $i,j\in I_{\ge w-1}$. Hence
  \[
    \sum_{i\in I_{\ge w-1}} |e(G_i)|\le \binom{|Y|}{2}\ \text{  and  }\sum_{i\in I_{\ge w-1}} |U^{w-1}(\mathcal P_i)|\le |Y|.
    \]
  Consequently, $\sum_{i\in I_{\ge w-1}} (|e(G_i)|+|U^{w-1}(\mathcal P_i)|)\le \binom{|Y|}{2}+|Y|=\frac{|Y|(|Y|+1)}{2}.$
  By (\ref{equa:e+v}),
  \begin{equation}\label{equa:I_w-1}
  |I_{\ge w-1}|\le \frac{|Y|(|Y|+1)}{2(2w-3)(t+1)}.
  \end{equation}

   If $I_{\ge w-1}=[M]$ then the desired conclusion is obtained. Suppose that $[M]\setminus I_{\ge w-1}\neq \emptyset$. By Property $\textbf{C}(e)$, it is easy to see that $|[M]\setminus I_{\ge w-1}|=1$, so let $[M]\setminus I_{\ge w-1}=\{i_0\}$.

   If $\emptyset\in\mathcal P_{i_0}$, then $\mathcal P_i\subseteq \binom{[n]}{w}$ for any $i\in [M]\setminus\{i_0\}$ by Property $\textbf{C}(e)$. This means that $\bigcup_{i\in [M]\setminus\{i_0\}} \mathcal P_i$ forms an $(n,t,w,w-2)$-CPECC code. By Theorem~\ref{thm:ub_(n,t,w,w-2)-CPECC},
   \[
   M-1\le C'(n,t,w,w-2)\le \frac{n(n-1)}{2(2w-3)(t+1)},
   \]
   and hence $M\le \frac{n(n+1)}{2(2w-3)(t+1)}$ since $n\ge (2w-3)(t+1)$. On the other hand, if $\emptyset\notin\mathcal P_{i_0}$, then there exists some block $B_0\in \mathcal P_{i_0}$ with $1\le |B_0|\le w-2$, which implies that $|Y|\le |[n]\setminus B_0|\le n-1$ by Property $\mathbf{{C}}(e)$. Therefore, by (\ref{equa:I_w-1}) and $n\ge (2w-3)(t+1)$,
   \[
   M= 1+|I_{\ge w-1}|\le 1+\frac{|Y|(|Y|+1)}{2(2w-3)(t+1)} \le \frac{n(n+1)}{2(2w-3)(t+1)}.
   \]
   This completes the proof. \qed
\end{proof}

The following theorem provides a construction of optimal $(n,t,w,w-2)$-LPECC codes derived from optimal $(n+1,t,w,w-2)$-CPECC codes.
\begin{theorem}\label{lem:relation}
    For any positive integers $n,t,w$ with $n\ge w+t$ and $w\ge 3$,
    \[
    C'(n+1,t,w,w-2)\le C(n,t,w,w-2).
    \]
\end{theorem}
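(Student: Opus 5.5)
Starting from an optimal $(n+1,t,w,w-2)$-CPECC code on the point set $[n]\cup\{\infty\}$, we build an $(n,t,w,w-2)$-LPECC code of the same size on $[n]$ by deleting the point $\infty$. Concretely, let $\mathcal B=\mathcal P_1\cup\dots\cup\mathcal P_M$ be the set-theoretic representation of an $(n+1,t,w,w-2)$-CPECC code with $M=C'(n+1,t,w,w-2)$. For each $i\in[M]$ put $\mathcal P_i'=\{B\setminus\{\infty\}: B\in\mathcal P_i\}$. Every block of $\mathcal P_i'$ is a subset of $[n]$ of size $w$ or $w-1$, so Property $\textbf{B}(w)$ holds trivially. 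We must also check that the families $\mathcal P_i'$ remain pairwise disjoint as sets of blocks and that $M$ does not drop.

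For Property $\textbf{A}(t)$, take any $t$-subset $T\subseteq[n]$. Since $n\ge w+t$ we have $n+1\ge t+w$, and $T$ is also a $t$-subset of $[n]\cup\{\infty\}$. Property $\textbf{A}(t)$ of the original code gives, for each $i$, a block $B\in\mathcal P_i$ with $B\cap T=\varnothing$. Then $B\setminus\{\infty\}\in\mathcal P_i'$ is also disjoint from $T$.

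The main point is Property $\textbf{C}(w-2)$, i.e.\ $|B_i'\Delta B_j'|\ge 2w-3$ for blocks from different classes.
\begin{itemize}
\item If neither $B_i$ nor $B_j$ contains $\infty$, the symmetric difference is unchanged.
\item If exactly one contains $\infty$, say $B_i$, then $|B_i'\Delta B_j'|=|B_i\Delta B_j|-1$. Since $|B_i\cap B_j|\le 1$ (as computed in Subsection~\ref{subsec:(n,1,w,w-2)-CPECC}) and $\infty\notin B_j$, we have $|B_i'\cap B_j'|\le 1$. Hence $|B_i'\Delta B_j'|\ge (w-1)+w-2=2w-3$.
\item If both contain $\infty$, then $B_i\cap B_j=\{\infty\}$ because they share at most one point. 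So $B_i'$ and $B_j'$ are disjoint $(w-1)$-sets, and $|B_i'\Delta B_j'|=2w-2\ge 2w-3$.
\end{itemize}
This case analysis also shows that blocks from different classes stay distinct, so the $\mathcal P_i'$ are pairwise disjoint and the new code has exactly $M$ classes.

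The step needing the most care is the mixed case with exactly one block containing $\infty$: there the distance bound is tight, and we must use the intersection bound $|B_i\cap B_j|\le1$ rather than just subtracting one from the original distance $2w-3$, which would only give $2w-4$. Collecting the three properties, $\mathcal P_1',\dots,\mathcal P_M'$ form an $(n,t,w,w-2)$-LPECC code of size $M$, and hence $C'(n+1,t,w,w-2)\le C(n,t,w,w-2)$.
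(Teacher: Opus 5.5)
Your proof is correct and is essentially the paper's argument: delete the extra point from every block, then check Property $\textbf{C}(w-2)$ by cases according to which of the two blocks contained that point. The only difference is cosmetic, in the mixed case: the paper notes that $|B_i\Delta B_j|=2w-2|B_i\cap B_j|$ is even, hence at least $2w-2$, before subtracting one. That is the same fact as your bound $|B_i\cap B_j|\le 1$, so subtracting one from the original distance does suffice once parity is taken into account.
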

\begin{proof}
    Let $\mathcal{B}=\mathcal P_1\cup\mathcal P_2\cup\dots\cup\mathcal P_M\subseteq 2^{[n+1]}$ be an $(n+1,t,w,w-2)$-CPECC code with $M=C'(n+1,t,w,w-2)$. Now we construct an $(n,t,w,w-2)$-LPECC code as follows.
    For each $i\in [M]$, define
    \[
    B' = \left\{
    \begin{array}{cc}
    B\setminus\{n+1\}, & \text{if } n+1\in B\in\mathcal P_i,\\
    B, & \text{otherwise,}
    \end{array} \right.
     \text{ and }\ \   \mathcal P'_i=\{B':B\in\mathcal P_i \}.
    \]


    Claim that $\mathcal B'=\mathcal P'_1\cup\mathcal P'_2\cup\dots\cup\mathcal P'_M\subseteq 2^{[n]}$ is an $(n,t,w,w-2)$-LPECC code. Indeed, Properties $\textbf{A}(t)$ and $\textbf{B}(w)$ clearly hold by definition. It remains to verify Property $\textbf{C}(e)$. Let $B'_i\in \mathcal P'_i, B'_j\in\mathcal P'_j$, where $i\neq j$ and $i,j\in [M]$.
    Let $B_i$ and $B_j$ be the unmodified subsets corresponding to $B'_i$ and $B'_j$, respectively.
    If $n+1\in B_i\cap B_j$ or $n+1\not\in B_i\cup B_j$, then
    $$|B'_i\triangle B'_j|=|B_i\triangle B_j|\geq 2w-3.$$
    Suppose that $n+1\in B_i,n+1\not\in B_j$ without loss of generality. Since $|B_i|=|B_j|=w$, $|B_i\triangle B_j|$ is even, and so $|B_i\triangle B_j|\geq2w-2$. Then $$|B'_i\triangle B'_j|=|B_i\triangle B_j|-1\geq 2w-3.$$

    Consequently, $C(n,t,w,w-2)\ge M=C'(n+1,t,w,w-2)$.\qed
\end{proof}

Combining Theorems \ref{lem:relation}, \ref{thm:CPECC_construct_3} and Corollary \ref{cor:coj_solve_2}, the following two corollaries hold.
\begin{corollary}\label{cor-4}
If $w\geq3$, $n>e^{e^{(2w-3)^{12(2w-3)^2}}}-1$, $n\equiv -1\pmod {l(2w-3)}$, $n\equiv 0\pmod {(2w-4)}$, $t=l(w-2)-1$, and $n\ge (2w-3)(t+1)$, then
\begin{align*}
 C(n,t,w,w-2)= \frac{n(n+1)}{2(2w-3)(t+1)}.
\end{align*}
\end{corollary}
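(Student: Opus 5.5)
The corollary follows by sandwiching $C(n,t,w,w-2)$ between the CPECC value on $n+1$ wires (from below) and the general LPECC bound (from above).

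\textbf{Upper bound.} Since $n\ge (2w-3)(t+1)$ and $w\ge 3$ are among the hypotheses, Theorem~\ref{thm:lpecc-upper} applies directly and gives
\[
C(n,t,w,w-2)\le \frac{n(n+1)}{2(2w-3)(t+1)}.
\]

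\textbf{Lower bound.} Set $N=n+1$ and translate the hypotheses to $N$:
\begin{itemize}
\item $n>e^{e^{(2w-3)^{12(2w-3)^2}}}-1$ becomes $N>e^{e^{(2w-3)^{12(2w-3)^2}}}$;
\item $n\equiv -1\pmod{l(2w-3)}$ becomes $N\equiv 0\pmod{l(2w-3)}$;
\item $n\equiv 0\pmod{2w-4}$ becomes $N\equiv 1\pmod{2w-4}$;
\item $t=l(w-2)-1$ is unchanged.
\end{itemize}
These are exactly the hypotheses of Corollary~\ref{cor:coj_solve_2} with $N$ in place of $n$. Hence
\[
C'(n+1,t,w,w-2)=\frac{(n+1)n}{2(2w-3)(t+1)}.
\]
Underneath, this is the $(N,2w-3,1)$-RBIBD of Theorem~\ref{thm:asy_RBIBD}, which exists because $N\equiv 0\pmod{2w-3}$ and $N\equiv 1\pmod{2w-4}$.

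Next apply Theorem~\ref{lem:relation}. Its requirement $n\ge w+t$ holds because
\[
(2w-3)(t+1)\ge w+t \quad\text{for } w\ge 3,\ t\ge 0.
\]
It gives
\[
C(n,t,w,w-2)\ge C'(n+1,t,w,w-2)=\frac{n(n+1)}{2(2w-3)(t+1)}.
\]

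\textbf{Conclusion.} The two bounds coincide, which gives the stated equality. The only points needing care are the index shift in the congruences and the inequality $n\ge w+t$; neither is a genuine obstacle.
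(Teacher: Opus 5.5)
Your proof is correct and follows the paper's route. The upper bound is Theorem~\ref{thm:lpecc-upper}. The matching lower bound comes from applying Corollary~\ref{cor:coj_solve_2} to $n+1$ after the index shift in the hypotheses, then transferring to $n$ via Theorem~\ref{lem:relation}; you correctly check its condition $n\ge w+t$ using $(2w-3)(t+1)-(w+t)=(2w-4)t+w-3\ge 0$.
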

\begin{corollary}\label{cor-5}
Let $t,w$ be fixed positive integers with $w\ge 3$. Suppose that $n$ is  sufficiently large. Then
    \begin{equation}\label{equa:cpecc-1}
        C(n,t,w,w-2)=\left\lfloor \frac{n(n+1)}{2(2w-3)(t+1)} \right\rfloor,
    \end{equation}
    whenever one of the following conditions holds:
    \begin{itemize}
        \item[$(i)$]  $t\in \{(w-1)x+(w-2)y-1: x,y\in\mathbb{Z}^+\}$;
        \item[$(ii)$]  $w-1\mid t+1$, $2w-3\mid n$, and $\frac{n(n+1)}{2w-3}\not\equiv 1,2,\dots,2w-3 \pmod{2(t+1)}$;
        \item[$(iii)$]  $w-2\mid t+1$, $2w-4\mid n$, and $\frac{n(n+1)}{2w-4}\not\equiv 1,2,\dots,2w-4 \pmod{\frac{(2w-3)(t+1)}{w-2}}$.
    \end{itemize}
    In particular, when $t\ge (w-1)(w-2)$, \eqref{equa:cpecc-1} holds for all sufficiently large $n$.
\end{corollary}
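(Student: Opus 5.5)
The statement is Corollary \ref{cor-5}. The plan is to sandwich $C(n,t,w,w-2)$ between two bounds. The upper bound comes from Theorem \ref{thm:lpecc-upper}; the lower bound comes from Theorem \ref{lem:relation} combined with Theorem \ref{thm:CPECC_construct_3} applied at $n+1$.

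\textbf{Upper bound.} For fixed $t,w$ and $n$ sufficiently large we certainly have $n\ge (2w-3)(t+1)$. Theorem \ref{thm:lpecc-upper} then gives
\[
C(n,t,w,w-2)\le \frac{n(n+1)}{2(2w-3)(t+1)}.
\]
Since $C(n,t,w,w-2)$ is an integer, we may take the floor.

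\textbf{Lower bound.} Put $n'=n+1$. Theorem \ref{lem:relation} (valid once $n\ge w+t$) gives $C(n,t,w,w-2)\ge C'(n+1,t,w,w-2)$. Next I will check that each condition of the corollary, read at $n$, is exactly the matching condition of Theorem \ref{thm:CPECC_construct_3}, read at $n'=n+1$.
\begin{itemize}
\item Condition (i) depends only on $t$, so it transfers unchanged.
\item For (ii), the hypothesis $2w-3\mid n$ is $2w-3\mid n'-1$. Also $\frac{n(n+1)}{2w-3}=\frac{n'(n'-1)}{2w-3}$, so the residue condition modulo $2(t+1)$ is identical.
\item For (iii), the same rewriting works with $2w-4$ in place of $2w-3$ and modulus $\frac{(2w-3)(t+1)}{w-2}$.
\end{itemize}
Since $n'$ is also sufficiently large, Theorem \ref{thm:CPECC_construct_3} yields
\[
C'(n+1,t,w,w-2)=\Big\lfloor \frac{(n+1)n}{2(2w-3)(t+1)}\Big\rfloor.
\]
This matches the upper bound, which proves \eqref{equa:cpecc-1}.

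\textbf{Last sentence.} This is immediate: Theorem \ref{thm:CPECC_construct_3} already shows that condition (i) holds whenever $t\ge (w-1)(w-2)$, via the Frobenius number argument, and condition (i) involves no requirement on $n$.

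\textbf{Main obstacle.} The only delicate point is bookkeeping. We must make sure "sufficiently large" thresholds exceed both $(2w-3)(t+1)$ and $w+t$, and that the congruence conditions line up under the substitution $n'=n+1$. As checked above, this substitution is exact. I do not expect any genuine difficulty.
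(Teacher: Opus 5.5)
Your proposal is correct and is the paper's argument: the corollary follows by combining the lower bound $C(n,t,w,w-2)\ge C'(n+1,t,w,w-2)$ from Theorem~\ref{lem:relation} with Theorem~\ref{thm:CPECC_construct_3} applied at $n+1$, and matching this against the upper bound of Theorem~\ref{thm:lpecc-upper}. The paper leaves the bookkeeping implicit, and you do it correctly: the substitution $n'=n+1$ turns conditions (ii) and (iii) exactly into those of Theorem~\ref{thm:CPECC_construct_3}, using $n(n+1)=n'(n'-1)$, and you correctly take $n$ large enough to exceed both $(2w-3)(t+1)$ and $w+t$.
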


Remark that Lemmas \ref{lem:C'(n,9,4,2)} and \ref{lem:CPECC_construct_2} imply the existence of several new families of optimal $(n,t,w,w-2)$-LPECC codes by Theorem \ref{lem:relation}.

\section{Concluding remarks}
In this paper, we show new upper bounds of $(n,1,w,w-2)$-CPECC codes, $(n,t,w,w-2)$-CPECC/LPECC codes, and their optimal families. However, it is still an open problem to complete the existence of optimal $(n,t,w,w-2)$-CPECC/LPECC codes.

\section*{Acknowledgements}

We are grateful to Professors Sihuang Hu, Lijun Ji, Zixiang Xu and Xiande Zhang for their valuable discussions and suggestions.


\end{document}